\let\mathcal\mathscr
\let\phi=\varphi
\let\kappa=\varkappa
\let\epsilon=\varepsilon
\newcommand\restr[2]{{
  \left.\kern-\nulldelimiterspace 
  #1 
  \littletaller 
  \right|_{#2} 
  }}
\newcommand{\littletaller}{\mathchoice{\vphantom{\big|}}{}{}{}}
\theoremstyle{theorem}
\newtheorem{classicalresult}{Classical Result}
\newtheorem{proposition}{Proposition}
\newtheorem{corollary}{Corollary}
\newtheorem{lemma}{Lemma}
\theoremstyle{definition}
\newtheorem{example}{Example}
\theoremstyle{remark}
\newtheorem{remark}{Remark}
\let\mathcal\mathscr
\newcommand{\cprime}{\/{\mathsurround=0pt$'$}}
\author{Ji\v rina Jahnov\'a}
\address{Mathematical Institute, Silesian University in Opava, Na Rybn\'{\i}\v{c}ku 1, 746 01 Opava, Czech Republic}
\email{Jirina.Jahnova@math.slu.cz} 
\title{On the Characteristic Form of $\mathfrak{g}$-Valued Zero-Curvature Representations}
\begin{document}

\begin{abstract}
We study $\mathfrak{g}$-valued zero-curvature representations (ZCRs) for partial differential equations in two independent variables from the perspective of their extension to the entire infinite jet space, focusing on their characteristic elements.  Since conservation laws -- more precisely, conserved currents --  and their generating functions for a given equation are precisely the $\mathbb{R}$-valued ZCRs and their characteristic elements, a natural question arises: to what extent can results known for conservation laws be extended to general $\mathfrak{g}$-valued ZCRs.

For a fixed matrix Lie algebra $\mathfrak{g} \subset \mathfrak{gl}(n)$, we formulate ZCRs as equivalence classes of $\mathfrak{g}$-valued function pairs on the infinite jet space that satisfy the Maurer--Cartan condition. Our main result establishes that every such ZCR admits a characteristic representative -- i.e., a representative in which the Maurer--Cartan condition takes a characteristic form -- generalizing the characteristic form known for conservation laws. This form is preserved under gauge transformations and can thus be regarded as a kind of normal form for ZCRs. We derive a new necessary condition, independent of the Maurer--Cartan equation, that must be satisfied by any characteristic representative. This condition is trivial in the abelian case but nontrivial whenever $\mathfrak{g}$ is nonabelian. These findings not only confirm structural assumptions used in previous works but also suggest potential applications in the classification and computation of ZCRs.
\end{abstract}

\subjclass[2020]{58A20, 35A30; 35Q53, 37K05}
\keywords{Matrix Lie algebra, zero-curvature representation, conservation laws, characteristic element, characteristic form.}
\maketitle

\section*{Introduction} \label{sec:introduction}

Zero-curvature representations (ZCRs), first introduced in \cite{Zak}, provide a fundamental tool in the study of integrable partial differential equations (PDEs), encoding the integrability condition as the flatness of a connection. The existence of a ZCR for a given PDE is not only a key signature of its integrability but also a powerful source of structural insight: ZCRs containing a non-removable spectral parameter can often be used to construct B\"{a}cklund transformations, generate recursion operators, or yield infinite hierarchies of conservation laws.
ZCRs have been extensively studied in the literature, and their structural and classification properties continue to be an active area of research in the theory of integrable systems  (see, e.g., \cite{Bal15, Bal16, Igonin2019,IgoninManno2020,Ma02, Ma02-2,Sak04,Sa11}).

A key object associated with a ZCR is its characteristic element. Motivated by analogies with the $\mathcal{C}$-spectral sequence and geometric conservation laws, this notion was introduced in~\cite{Ma92} for $\mathfrak{g}$-valued ZCRs associated with a general partial differential equation, where $\mathfrak{g}$ is an arbitrary but fixed Lie algebra. In the abelian case $\mathfrak{g} = \mathbb{R}$, and in the case of two independent variables, this element coincides with the generating function of a conservation law. Independently, and without resorting to cohomological tools, this concept was also introduced by Sakovich for evolutionary equations~\cite{Sa95}.

In~\cite{Ma92} and further developed in~\cite{Ma97}, attention was primarily paid to a necessary condition that any characteristic element must satisfy. In the case of two independent variables, when \( \mathfrak{g} = \mathbb{R} \), this condition reduces to the well-known requirement that the generating function of a conservation law is a cosymmetry.  
Unlike the cosymmetry condition, however, in the general nonabelian case, the necessary condition for the characteristic element also involves the functions defining the ZCR itself. Hence, it cannot be solved independently. On the other hand, the fact that this condition imposes additional constraints on the ZCR, combined with the transformation behavior of the characteristic element under gauge transformations (namely, conjugation by elements of the associated Lie group), makes the characteristic element a valuable tool in the search for and classification of gauge equivalence classes of $\mathfrak{g}$-valued ZCRs for a given equation manifold $\mathcal{E}$ (see \cite{Ma92,Ma97,Se05,Se08}). For this reason, characteristic elements have typically been studied in the form of $N$-tuples of $\mathfrak{g}$-valued functions expressed in fixed internal coordinates on~$\mathcal{E}$.

In the theory of conservation laws, the key task is to determine whether a solution of the necessary condition -- that is, a cosymmetry -- is in fact the generating function of a genuine conservation law.  
The crucial observation is that, modulo trivial currents, the generating function can be extended from the equation manifold~$\mathcal{E}$ to the full jet space~$J^\infty(\pi)$ in such a way that the associated conservation law can be written in its so-called characteristic form. This extension process fundamentally relies on integration by parts (see, e.g., \cite{Boch,Ol93}) for functions on $J^\infty(\pi)$, and it cannot be carried out using only internal coordinates on the equation manifold.

The goal of this paper is to explore whether the above result (and its consequences) known for generating functions of conservation laws can be extended to the more general case of characteristic elements of $\mathfrak{g}$-valued ZCRs, where $\mathfrak{g}\subset \mathfrak{gl}(n)$ is \textit{any} fixed matrix Lie algebra.
To this end, we leave aside the coordinate-free cohomological  approach and instead study the characteristic element of  $\mathfrak{g}$-valued ZCRs from the jet-coordinates perspective. We focus on those $\mathfrak{g}$-valued ZCRs for a PDE~$\mathcal{E}$ in two independent variables that can be extended to the entire jet space~$J^\infty(\pi)$, still taking values in~$\mathfrak{g}$. We regard such ZCRs as equivalence classes of $\mathfrak{g}$-valued function pairs on~$J^\infty(\pi)$ that satisfy the Maurer-Cartan condition. This viewpoint is consistent with the approach taken e.g. by Balandin (see~\cite{Bal15,Bal16}), where ZCRs are considered as pairs of functions on the full jet space -- that is, as representatives of (some, and as we shall prove, all) elements of our equivalence classes.

Our principal result shows that for each such $\mathfrak{g}$-valued ZCR, one can always find a characteristic representative -- i.e., a representative in which the Maurer-Cartan condition, extended to the entire jet space, takes a specific form that, in the case $\mathfrak{g} = \mathbb{R}$, coincides precisely with the characteristic form of a conservation law (cf.~\cite{Ol93}). As a direct consequence of this result, we obtain a sufficient condition for a given function to be the characteristic element of a given ZCR.
It turns out that the characteristic form of the Maurer-Cartan condition is preserved under gauge transformations - the image of a characteristic representative under such a transformation is again a characteristic representative of the trasformed ZCR. This implies that the characteristic representative can be interpreted as a kind of a normal form of a ZCR. We also derive a new necessary condition that every characteristic representative must satisfy.
 This condition is independent of the Maurer--Cartan condition and appears to be nontrivial in the nonabelian setting. 
  
To the best of our knowledge, the obtained results have not appeared previously in the literature. We believe that this work could enhance the effectiveness of computational searches for $\mathfrak{g}$-valued ZCRs associated with a given equation, since existing approaches (\cite{Ma92, Ma97}) rely only on a necessary condition for the characteristic element, whereas our results provide also a sufficient condition. By analogy with the computational methods for conservation laws described in Table~5 of \cite{Wolf}, we may view the existing approaches to computing ZCRs as corresponding to steps IA and IB. By contrast, our results allow one to combine steps analogous to IB and IIA, as recommended by Wolf in the context of conservation laws (\cite{Wolf}).

We also believe that the results presented here can be used for classification purposes -- for example, to identify the most general form of a PDE which admits a given function pair as a representative of (an equivalence class of) a ZCR, since the results are formulated without the use of internal coordinates on any equation manifold.

The present paper is organized as follows. In Section 1, we briefly recall basic definitions and notations of the geometric theory of differential equations. We also provide a more detailed exposition of the theory of $\mathfrak{g}$-valued functions defined on the infinite jet space and on the diffiety, along with a description of the relevant algebraic structures these spaces carry.

Section 2 provides, for completeness, the definition of conservation laws for a diffiety in two independent variables, formulated as a specific equivalence class of pairs of functions on the jet space. The presentation given here deviates slightly from the standard terminology found in, for example, \cite{Ol93} or \cite{Wolf}, with the aim of clarifying the perspective that conservation laws are, in fact, a special case of zero-curvature representations. We also review several classical results on conservation laws, which will serve as a foundation for their generalizations to ZCRs in the following section.

Section 3 provides a more detailed exposition of the theory of $\mathfrak{g}$-valued zero-curvature representations (ZCRs) for a diffiety in two independent variables. We define ZCRs as specific equivalence classes of pairs of $\mathfrak{g}$-valued functions on the jet space. We also recall the necessary background on their associated characteristic elements. Furthermore, we revisit the notion of gauge transformations for ZCRs and examine their behaviour in the special case of conserved currents.

The main results of this paper are stated and proved in Section 4. Their presentation is structured to emphasize the analogy with the results on conservation laws formulated in Section 2.

\section{Preliminaries}\label{sec:Prel}
In this section, we recall the basic concepts of the geometric theory of differential equations; for further details, see, e.g., \cite{Boch, Kra17, Ol93}.
We also provide a more detailed exposition of functions defined on the entire jet space and on a diffiety, taking values in a general matrix Lie algebra~$\mathfrak{g}$.
For completeness and precision, we include several auxiliary results along with their proofs.
\label{sec:1}
\subsection{Jet space $J^{\infty}(\pi)$, the ring of differential functions on $J^{\infty}(\pi)$, the Lie algebra of $\mathfrak{g}$-valued functions on $J^{\infty}(\pi)$}

Below, we work in the infinite-dimensional jet space $J^{\infty}(\pi)$, where $\pi: \mathbb{R}^2 \times \mathbb{R}^m \to \mathbb{R}^2$ is the trivial bundle with coordinates $x$, $y$, and $u_I^k$ for $k = 1, \dots, m$. Here, $x$ and $y$ denote the independent variables, while $u_I^k$ represent the partial derivatives of the dependent variables with respect to the independent variables specified by the multiindex $I$ (the multiindex $I$ is symmetric and of arbitrary finite length $|I|$, consisting of symbols $x$ and $y$). The coordinate $u^k_{\emptyset}$ corresponding to the empty multiindex is typically denoted by $u^k$.

A \textit{differential function} on $J^{\infty}(\pi)$ is a smooth real-valued function $F$ that depends on the variables $x$, $y$, $u^k$, and on only finitely many jet variables $u_I^k$.
The ring of all differential functions on $J^{\infty}(\pi)$ is denoted by $\mathcal{F}(\pi)$. 

The infinite jet space $J^{\infty}(\pi)$ is equipped with two commuting vector fields known as the \textit{total derivative operators}:
\begin{equation*}
D_x = \frac{\partial}{\partial x} + \sum_{k,I} u_{I,x}^k \frac{\partial}{\partial u_I^k}, \qquad
D_y = \frac{\partial}{\partial y} + \sum_{k,I} u_{I,y}^k \frac{\partial}{\partial u_I^k}.
\end{equation*}
These vector fields generate the so-called \textit{Cartan distribution} $\mathcal{C}$ on $J^{\infty}(\pi)$. 

In what follows, $D_I$ denotes the composition of total derivative operators corresponding to the independent variables specified by the multiindex $I$.  A differential operator $$\Delta:(\mathcal{F}(\pi))^M\to(\mathcal{F}(\pi))^N,\ (f_1,\dots,f_M)\mapsto(g_1,\dots,g_N)$$ expressed in the terms of the total derivatives is called a \textit{$\mathcal{C}$-differential operator}; that is 
\begin{equation*}\Delta((f^1,\dots,f^M))=(g^1,\dots,g^N), \mbox{ where } g^i=\sum_{I,j}a^i_{I,j}D_I(f^j),\end{equation*} where $a^i_{I,j}\in\mathcal{F}(\pi)$.

The \textit{total divergence operator} is the $\mathcal{C}$-differential operator 
\begin{equation*}\mathrm{Div}:\ \mathcal{F}{(\pi)}\times\mathcal{F}(\pi)\to\mathcal{F}(\pi),\ (f_1,f_2)\mapsto D_xf_1+D_yf_2.\end{equation*}
The \textit{Euler operator} $\mathbf{E} = (\mathrm{E}_1, \dots, \mathrm{E}_m)$ is a differential operator  (not a $\mathcal{C}$-differential operator)
\begin{equation*}
\mathbf{E} : \mathcal{F}(\pi) \to \left( \mathcal{F}(\pi) \right)^m,
\end{equation*}
whose $k$-th component is given by the formula
\begin{equation*}
\mathrm{E}_k = \sum_I (-1)^{|I|} D_I \circ \frac{\partial}{\partial u_I^k}.
\end{equation*}
It is well known that 

\begin{equation}\label{eulerdiv}\mathrm{Im}\ \mathrm{Div}=\mathrm{ker}\ \mathbf{E}.\end{equation}\\

Let $\mathfrak{g} \subset \mathfrak{gl}(n)$ be a matrix Lie algebra. A \textit{$\mathfrak{g}$-valued (differential) function} on $J^{\infty}(\pi)$ is a smooth function $J^{\infty}(\pi) \to \mathfrak{g}$ that depends on only finitely many variables $u_I^k$. Such a function can be regarded as a matrix $M$ whose entries are differential functions, with the property that $M(\theta) \in \mathfrak{g}$ for all $\theta \in J^{\infty}(\pi)$. 
If we fix a basis  \( T_1, \dots, T_d \) in  \( \mathfrak{g} \), any $\mathfrak{g}$-valued function  \( M \) can be  written in the form  
\begin{equation}\label{decomp}
M = f^1 T_1 + \dots + f^d T_d,
\end{equation}
where \( f^1, \dots, f^d \in \mathcal{F}(\pi) \).

The set of all $\mathfrak{g}$-valued functions on $J^{\infty}(\pi)$ is denoted by $\mathcal{F}^{\mathfrak{g}}(\pi)$. It naturally inherits the Lie algebra structure from $\mathfrak{g}$, with the Lie bracket defined pointwise by
\begin{equation*}
[M, N](\theta) := [M(\theta), N(\theta)]
\end{equation*}
for all $\theta \in J^{\infty}(\pi)$.

Moreover, given a differential function $f \in \mathcal{F}(\pi)$ and a $\mathfrak{g}$-valued function $M \in \mathcal{F}^{\mathfrak{g}}(\pi)$, their product $f \cdot M \in \mathcal{F}^{\mathfrak{g}}(\pi)$ is defined pointwise as
\begin{equation*}
(f \cdot M)(\theta) := f(\theta) \cdot M(\theta),
\end{equation*}
where the dot on the right-hand side denotes scalar multiplication of the matrix $M(\theta)\in\mathfrak{g}$ by the scalar $f(\theta)\in\mathbb{R}$. In components, this means $(f \cdot M)_{ij} = f \cdot M_{ij}$, where the symbol $\cdot$ on the right-hand side denotes the multiplication in the ring $\mathcal{F}(\pi)$.

Since the total derivative operators act as differential operators on the ring $\mathcal{F}(\pi)$, they extend naturally to $\mathfrak{g}$-valued functions by differentiating entry-wise, that is, if $M\in\mathcal{F}^{\mathfrak{g}}(\pi)$, we define 
\begin{equation*}
(D_x M)_{ij} := D_x(M_{ij}), \qquad (D_y M)_{ij} := D_y(M_{ij}).
\end{equation*}
The resulting matrix is an element of $\mathcal{F}^{\mathfrak g}(\pi)$; indeed, if we apply the total derivative operator to the decomposition \eqref{decomp}, we obtain
\begin{equation*}D_xM=(D_xf^1)T_1+\dots+(D_xf^d)T_d,\end{equation*} since $T_1,\dots,T_d$ are matrices with constant entries. Hence, we have
$D_xM\in\mathcal{F}^{\mathfrak{g}}(\pi)$, and  $D_x$ and $D_y$ are thus linear maps $ \mathcal{F}^{\mathfrak{g}}(\pi)\to\mathcal{F}^{\mathfrak{g}}(\pi)$.

The following identities hold for any $f \in \mathcal{F}(\pi)$ and $M, A, B \in \mathcal{F}^{\mathfrak{g}}(\pi)$:
\begin{equation*}
D_x(f \cdot M) = D_x(f) \cdot M + f \cdot D_x(M), \ \ D_x[A, B] = [D_x A, B] + [A, D_x B].\end{equation*}
Analogous formulas hold for $D_y$.

Even though $\mathcal{F}^{\mathfrak{g}}(\pi)$ is not closed under the matrix multiplication, given $M,N\in\mathcal{F}^{\mathfrak{g}}(\pi)$, their product lies in $\mathcal{F}^{\mathfrak{gl}(n)}(\pi)$, and thus the operators $D_x,D_y$ can be applied to it. We have
\begin{equation*}\qquad D_x(M \cdot N) = D_x(M) \cdot N + M \cdot D_x(N).\end{equation*}

If we replace total differentiation everywhere above with partial differentiation, all results remain valid. In particular, $ \mathcal{F}^{\mathfrak{g}}(\pi)$ is closed under partial differentiation.

The observations above allow us to define the analogues of the total divergence operator and the Euler operator  for $\mathfrak{g}$-valued functions on $J^{\infty}(\pi)$. These are defined entrywise as follows:
\begin{equation*}
\mathrm{Div}^{\mathfrak{g}} \colon \mathcal{F}^{\mathfrak{g}}(\pi) \times \mathcal{F}^{\mathfrak{g}}(\pi) \to \mathcal{F}^{\mathfrak{g}}(\pi), \quad 
(\mathrm{Div}^{\mathfrak{g}}(M,N))_{ij} := \mathrm{Div}(M_{ij}, N_{ij}),
\end{equation*}

\begin{equation*}
\mathrm{E}^{\mathfrak{g}} = (\mathrm{E}^{\mathfrak{g}}_1, \dots, \mathrm{E}^{\mathfrak{g}}_m) 
\colon \mathcal{F}^{\mathfrak{g}}(\pi) \to \left(\mathcal{F}^{\mathfrak{g}}(\pi)\right)^m, \quad 
(\mathrm{E}^{\mathfrak{g}}_k(M))_{ij} := \mathrm{E}_k(M_{ij}),
\end{equation*}
where $M, N \in \mathcal{F}^{\mathfrak{g}}(\pi)$, $\mathrm{Div}$ denotes the standard total divergence operator  and $\mathrm{E}_k$ denotes the $k$-th component of the standard Euler operator.
The relationship between these operators is the same as in the case of $\mathbb{R}$-valued functions:
\begin{lemma}\label{eulerdivLie}
Let $\mathfrak{g}\subset\mathfrak{gl}$(n) be an arbitrary matrix Lie algebra. Then the following identity holds:
\begin{equation*}\mathrm{Im}\ \mathrm{Div}^{\mathfrak{g}}=\mathrm{ker}\ \mathrm{\mathbf{E}}^{\mathfrak{g}}.\end{equation*}
\end{lemma}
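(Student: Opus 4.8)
The plan is to deduce the lemma from the scalar identity \eqref{eulerdiv} by exploiting that both $\mathrm{Div}^{\mathfrak g}$ and $\mathbf{E}^{\mathfrak g}$ are defined entrywise. First I would dispatch the inclusion $\mathrm{Im}\,\mathrm{Div}^{\mathfrak g}\subseteq\ker\mathbf{E}^{\mathfrak g}$, which needs nothing about the algebra structure of $\mathfrak g$: if $M=\mathrm{Div}^{\mathfrak g}(P,Q)$ with $P,Q\in\mathcal F^{\mathfrak g}(\pi)$, then $M_{ij}=\mathrm{Div}(P_{ij},Q_{ij})\in\mathrm{Im}\,\mathrm{Div}=\ker\mathbf{E}$ for all $i,j$, hence $\mathrm E_k(M_{ij})=0$, i.e. $\mathbf{E}^{\mathfrak g}(M)=0$.

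For the reverse inclusion the naive entrywise argument fails, and this is the main obstacle: if $\mathbf{E}^{\mathfrak g}(M)=0$ then each entry $M_{ij}$ lies in $\mathrm{Im}\,\mathrm{Div}$ and can be written $M_{ij}=D_xP_{ij}+D_yQ_{ij}$, but the matrices $P=(P_{ij})$, $Q=(Q_{ij})$ obtained in this way need not be $\mathfrak g$-valued. The remedy I would use is to integrate by parts in a basis of $\mathfrak g$ rather than entry by entry. Fixing a basis $T_1,\dots,T_d$ of $\mathfrak g$, I write $M=f^1T_1+\dots+f^dT_d$ with $f^a\in\mathcal F(\pi)$ as in \eqref{decomp}; since the $T_a$ have constant entries and $\mathrm E_k$ is $\mathbb R$-linear, an entrywise check gives $\mathrm E_k^{\mathfrak g}(M)=\sum_{a=1}^{d}\mathrm E_k(f^a)\,T_a$.

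Then $\mathbf{E}^{\mathfrak g}(M)=0$ together with the linear independence of $T_1,\dots,T_d$ forces $\mathrm E_k(f^a)=0$ for all $k=1,\dots,m$ and all $a=1,\dots,d$, so each $f^a\in\ker\mathbf{E}$. By \eqref{eulerdiv} there are $g^a,h^a\in\mathcal F(\pi)$ with $f^a=D_xg^a+D_yh^a$, and setting $P:=\sum_a g^aT_a$, $Q:=\sum_a h^aT_a\in\mathcal F^{\mathfrak g}(\pi)$ and using $D_x(g^aT_a)=(D_xg^a)T_a$ (the $T_a$ being constant matrices), I obtain $\mathrm{Div}^{\mathfrak g}(P,Q)=D_xP+D_yQ=\sum_a(D_xg^a+D_yh^a)T_a=M$, so $M\in\mathrm{Im}\,\mathrm{Div}^{\mathfrak g}$. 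The only delicate point in the whole argument is precisely this passage from entrywise potentials to $\mathfrak g$-valued potentials; once the basis trick is in place, the rest is a routine transcription of the scalar statement, and the identity $\mathrm{Im}\,\mathrm{Div}^{\mathfrak g}=\ker\mathbf{E}^{\mathfrak g}$ follows.
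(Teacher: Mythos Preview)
Your proof is correct and follows essentially the same approach as the paper: both establish the easy inclusion entrywise, then for the reverse inclusion fix a basis $T_1,\dots,T_d$ of $\mathfrak g$, use the linear independence of the $T_a$ to deduce $\mathbf{E}(f^a)=0$ from $\mathbf{E}^{\mathfrak g}(M)=0$, and reassemble $\mathfrak g$-valued potentials from the scalar ones. You even articulate explicitly the obstacle (that entrywise potentials need not be $\mathfrak g$-valued) that the paper's proof addresses implicitly.
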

\begin{proof}
The inclusion $\operatorname{Im} \mathrm{Div}^{\mathfrak{g}} \subset \ker \mathbf{E}^{\mathfrak{g}}$ follows immediately from the definition of the operators under consideration and from~\eqref{eulerdiv}. From the same identity, we also deduce that if $\mathbf{E}^{\mathfrak{g}}(L) = 0$, then there exist $M, N \in \mathcal{F}^{\mathfrak{gl}(n)}(\pi)$ such that
$
D_x M + D_y N = L.
$
The key point is to show that in fact $M, N \in \mathcal{F}^{\mathfrak{g}}(\pi)\subset\mathcal{F}^{\mathfrak{gl}(n)}(\pi)$.\\ To this end, let us decompose $L$ with respect to a fixed basis $T_1, \dots, T_d$ of $\mathfrak{g}$:
\begin{equation*}
L = f^1 T_1 + \dots + f^d T_d,
\end{equation*}
and assume that $\mathbf{E}^{\mathfrak{g}}(L) = 0$. Then, for each $k = 1, \dots, m$,
\begin{equation*}
\mathrm{E}_k^{\mathfrak{g}}(L) = \mathrm{E}_k(f^1) T_1 + \dots + \mathrm{E}_k(f^d) T_d = 0.
\end{equation*}
Since $T_1, \dots, T_d$ are linearly independent over $\mathbb{R}$, it follows that $\mathrm{E}_k(f^i) = 0$ for all $i = 1, \dots, d$ and $k = 1, \dots, m$. In particular, each scalar function $f^i\in\mathcal{F}(\pi)$ satisfies $\mathrm{\mathbf{E}}(f^i) = 0$, and thus by~\eqref{eulerdiv}, we have
\begin{equation*}
f^i = D_x f^i_1 + D_y f^i_2
\end{equation*}
for some $f^i_1, f^i_2 \in \mathcal{F}(\pi)$. Putting
$
M := \sum_{i=1}^d f^i_1 T_i, \  N := \sum_{i=1}^d f^i_2 T_i,
$
we obtain 
\begin{equation*}L=D_xM+D_yN,\end{equation*} 
where
$M, N \in \mathcal{F}^{\mathfrak{g}}(\pi)$, as required.
\end{proof}

The differential Lie algebra $\mathcal{F}^{\mathfrak{g}}(\pi)$ can be equipped with more operators. In fact, any $\mathfrak{g}$-valued function $R \in \mathcal{F}^{\mathfrak{g}}(\pi)$ gives rise to two new operators $\mathcal{F}^{\mathfrak{g}}(\pi)\to\mathcal{F}^{\mathfrak{g}}(\pi)$:
\begin{equation}\label{TDhat}
\widehat{D}_x^R := D_x - \mathrm{ad}_R, \quad \widehat{D}_y^R := D_y - \mathrm{ad}_R,
\end{equation}
where $\mathrm{ad}_U(V) := [U, V]$ denotes the adjoint action. Both operators are derivations with respect to the Lie bracket.

However, we remark that, for two arbitrary functions $R$ and $S$, the operators $\widehat{D}_x^R$ and $\widehat{D}_y^S$ generally \textit{do not mutually commute} on $\mathcal{F}^{\mathfrak{g}}(\pi)$.\\

\subsection{Diffiety $\mathcal{E}$, the ring of smooth functions on $\mathcal{E}$, the Lie algebra of $\mathfrak{g}$-valued functions on~$\mathcal{E}$}
Let  
\begin{equation*}
F^l(x, y, \dots, u_I^k, \dots) = 0,\quad l = 1, \dots, N 
\end{equation*}
be a system of partial differential equations. Considering this system together with all its differential consequences  
$
D_I F^l = 0,\  l = 1, \dots, N,
$  
defines a submanifold $\mathcal{E} \subset J^{\infty}$:
\begin{equation*}
\mathcal{E} := \left\{ \theta \in J^{\infty}\ \middle|\ D_I(F^l)(\theta) = 0\ \text{for all } l, I \right\}.
\end{equation*}
The total derivative operators can be restricted to $\mathcal{E}$, and the manifold $\mathcal{E}$ equipped with the restriction of the Cartan distribution $\mathcal{C}|_{\mathcal{E}}$ constitutes an instance of a \emph{diffiety}. The total derivatives restricted to $\mathcal{E}$ will still be denoted by $D_x$ and $D_y$.

Throughout this paper, we assume that the manifold $\mathcal{E}$ satisfies the \textit{regularity condition} in the sense of \cite{Kra17}; that is, we assume that a function $f \in \mathcal{F}(\pi)$ vanishes on $\mathcal{E}$ if and only if there exists a $\mathcal{C}$-differential operator $\Delta$ such that $f = \Delta((F^1,\dots,F^N))$. This assumption enables us to define the ring of smooth functions on $\mathcal{E}$ as follows.

Let $\mathcal{I}(\mathcal{E})$ denote the differential ideal in $\mathcal{F}(\pi)$ generated by all total derivatives of the functions $F^l\in\mathcal{F}(\pi)$, $l=1,\dots,N$, i.e.,
\begin{equation*}
\mathcal{I}(\mathcal{E}) := \left\{ \sum_{I, l} f_l^I D_I(F^l)\ \bigm|\ f_l^I \in \mathcal{F}(\pi) \right\}.
\end{equation*}
The \textit{ring of smooth functions on $\mathcal{E}$} is defined as the factor ring
\begin{equation*}
\mathcal{F}(\mathcal{E}) := \mathcal{F}(\pi) / \mathcal{I}(\mathcal{E}) = \left\{ f + \mathcal{I}(\mathcal{E})\ \middle|\ f \in \mathcal{F}(\pi) \right\}.
\end{equation*}
The equivalence class $[f] \in \mathcal{F}(\pi) / \mathcal{I}(\mathcal{E})$ will be denoted by $f|_{\mathcal{E}}$, the zero element \( [0] \equiv \restr{0}{\mathcal{E}} \) of the factor ring will simply be denoted by \( 0 \). 
We thus have $f|_{\mathcal{E}} = g|_{\mathcal{E}}$ if and only if there exist differential functions $h_l^I \in \mathcal{F}(\pi)$ such that
\begin{equation*}
f = g + \sum_{I, l} h_l^I D_I(F^l).
\end{equation*}
In particular,  \( \restr{f}{\mathcal{E}} = 0 \) if and only if \( f \in \mathcal{I}(\mathcal{E}) \).

Conversely, given a function $f|_{\mathcal{E}} \in \mathcal{F}(\mathcal{E})$, any $g \in \mathcal{F}(\pi)$ such that $g|_{\mathcal{E}} = f|_{\mathcal{E}}$ is called an \emph{extension of $f|_{\mathcal{E}}$ to $J^{\infty}(\pi)$}.

The action of the total derivative operators on $\mathcal{F}(\mathcal{E})$ is given by
\begin{equation*}
D_x(f|_{\mathcal{E}}) := \restr{(D_x f)}{\mathcal{E}},\quad D_y(f|_{\mathcal{E}}) := \restr{(D_y f)}{\mathcal{E}}
\end{equation*}

Let $\mathfrak{g} \subset \mathfrak{gl}(n)$ be a matrix Lie algebra, and let $\mathcal{F}^{\mathfrak{g}}(\pi)$ denote the corresponding Lie algebra of $\mathfrak{g}$-valued functions on $J^{\infty}(\pi)$. 
The regularity condition imposed on $\mathcal{E}$ implies that an element $M\in\mathcal{F}^{\mathfrak g}(\pi)$ vanishes on $\mathcal{E}$ if and only if each of its entries lies in the differential ideal $\mathcal{I}(\mathcal{E})$. This leads to the following definition:

Define the ideal
\begin{equation*}
\mathcal{I}^{\mathfrak{g}}(\mathcal{E}) = \left\{ M \in \mathcal{F}^{\mathfrak{g}}(\pi)\ \middle|\ M_{ij} \in \mathcal{I}(\mathcal{E}) \right\}
\end{equation*}
as the set of all $\mathfrak{g}$-valued functions whose entries belong to the differential ideal $\mathcal{I}(\mathcal{E})$.

The Lie algebra of \textit{$\mathfrak{g}$-valued functions on $\mathcal{E}$} is then defined as the factor Lie algebra
\begin{equation*}
\mathcal{F}^{\mathfrak{g}}(\mathcal{E}) := \mathcal{F}^{\mathfrak{g}}(\pi)\big/\mathcal{I}^{\mathfrak{g}}(\mathcal{E}) = \left\{ M + \mathcal{I}^{\mathfrak{g}}(\mathcal{E})\ \middle|\ M \in \mathcal{F}^{\mathfrak{g}}(\pi) \right\}.
\end{equation*}
The equivalence class $[M] \in \mathcal{F}^{\mathfrak{g}}(\pi)/\mathcal{I}^{\mathfrak{g}}(\mathcal{E})$ is denoted by $\restr{M}{\mathcal{E}}$, the zero element \( [0] \equiv \restr{0}{\mathcal{E}} \) of the factor Lie algebra will simply be denoted by \( 0 \). We thus have $\restr{M}{\mathcal{E}} = \restr{N}{\mathcal{E}}$ if and only if, for all $i,j = 1, \dots, n$, their entries satisfy $\restr{M_{ij}}{\mathcal{E}} = \restr{N_{ij}}{\mathcal{E}}$, that is, if and only if
\begin{equation*}
M_{ij} = N_{ij} + \sum_{I,l} h_{ijl}^I D_I(F^l),
\end{equation*}
for some differential functions $h_{ijl}^I \in \mathcal{F}(\pi)$. In particular, \( \restr{M}{\mathcal{E}} = 0 \) if and only if  \( M \in \mathcal{I}^{\mathfrak{g}}(\mathcal{E}) \).

Having fixed a basis $T_1,\dots, T_d$ in $\mathfrak{g}$, one may ask how to determine -- based on its decomposition~\eqref{decomp} -- whether a given $\mathfrak{g}$-valued function $M\in\mathcal{F}^{\mathfrak{g}}(\pi)$ belongs to $\mathcal{I}^{\mathfrak g}(\mathcal{E})$: the answer to this question is given in the following lemma:
\begin{lemma}\label{decomp_lm}
Let $T_1,\dots, T_d$ be a basis of $\mathfrak{g}$ and
let $M\in\mathcal{F}^{\mathfrak g}(\pi)$ be a $\mathfrak{g}$-valued function on $J^{\infty}(\pi)$. Let $$M=f^1T_1+\dots f^dT_d, \mbox{ where } f^i\in\mathcal{F}(\pi)$$ be its decomposition \eqref{decomp} with respect to this basis.
Then, $M\in\mathcal{I}^{\mathfrak{g}}(\mathcal{E})$ if and only if all the functions $f^1,\dots,f^d$ belong to the ideal $\mathcal{I}(\mathcal{E})$. \end{lemma}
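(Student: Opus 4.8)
The plan is to prove the two implications of the stated equivalence separately, with the forward direction being essentially immediate and the reverse direction requiring the linear independence of the basis together with a change-of-basis argument for the matrix entries.

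First I would prove that if all the coefficient functions $f^1,\dots,f^d$ lie in $\mathcal{I}(\mathcal{E})$, then $M=f^1T_1+\dots+f^dT_d\in\mathcal{I}^{\mathfrak{g}}(\mathcal{E})$. This is the easy direction: each entry $M_{ij}=\sum_{s=1}^d f^s (T_s)_{ij}$ is an $\mathcal{F}(\pi)$-linear combination of the $f^s$; since $\mathcal{I}(\mathcal{E})$ is an ideal in $\mathcal{F}(\pi)$ and each $f^s\in\mathcal{I}(\mathcal{E})$, we get $M_{ij}\in\mathcal{I}(\mathcal{E})$ for all $i,j$, which is precisely the condition $M\in\mathcal{I}^{\mathfrak{g}}(\mathcal{E})$ by definition of the ideal $\mathcal{I}^{\mathfrak{g}}(\mathcal{E})$.

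For the converse, suppose $M\in\mathcal{I}^{\mathfrak{g}}(\mathcal{E})$, i.e. $M_{ij}\in\mathcal{I}(\mathcal{E})$ for all $i,j$. I must recover the same property for the coefficients $f^s$. The key observation is that $T_1,\dots,T_d$ are linearly independent elements of $\mathfrak{gl}(n)$, hence there exist real constants $c^s_{ij}$ (for $s=1,\dots,d$) such that $f^s=\sum_{i,j} c^s_{ij} M_{ij}$ for every $\mathfrak{g}$-valued function written as $M=\sum_s f^s T_s$; concretely, one extends $T_1,\dots,T_d$ to a basis of $\mathfrak{gl}(n)$ and takes $c^s_{ij}$ to be the entries of the dual basis, so that the linear functional $A\mapsto \sum_{i,j}c^s_{ij}A_{ij}$ on $\mathfrak{gl}(n)$ returns the $T_s$-component of $A$. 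Since $\mathcal{I}(\mathcal{E})$ is closed under $\mathbb{R}$-linear combinations (it is an ideal, in particular an $\mathbb{R}$-subspace) and each $M_{ij}\in\mathcal{I}(\mathcal{E})$, it follows that $f^s=\sum_{i,j}c^s_{ij}M_{ij}\in\mathcal{I}(\mathcal{E})$ for each $s=1,\dots,d$, as required.

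The only mild subtlety — and the step I would be most careful about — is making the "extract the coefficient by a linear functional with constant entries" argument precise, since one needs the constants $c^s_{ij}$ to be independent of the point of $J^\infty(\pi)$; this is guaranteed because $T_1,\dots,T_d$ is a fixed basis of the finite-dimensional vector space $\mathfrak{g}$, so the decomposition \eqref{decomp} is obtained by applying fixed real-linear projections to the (pointwise constant-coefficient) matrix entries. No integration by parts or use of the regularity condition beyond what is already built into the definition of $\mathcal{I}^{\mathfrak{g}}(\mathcal{E})$ is needed, so the proof is short.
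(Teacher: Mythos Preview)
Your proof is correct, but it takes a different route from the paper's. The paper invokes the regularity assumption twice: first to say that $M\in\mathcal{I}^{\mathfrak{g}}(\mathcal{E})$ is equivalent to $M(\theta)=0$ for all $\theta\in\mathcal{E}$, then uses the pointwise linear independence of $T_1,\dots,T_d$ to conclude $f^i(\theta)=0$ on $\mathcal{E}$, and finally uses regularity again to get $f^i\in\mathcal{I}(\mathcal{E})$. Your argument is purely algebraic: you never pass to pointwise evaluation, instead exploiting that the coordinate projections $A\mapsto\sum_{i,j}c^s_{ij}A_{ij}$ recovering the $T_s$-coefficient are $\mathbb{R}$-linear maps with constant coefficients, so they send $\mathcal{I}(\mathcal{E})$-valued entries to an element of $\mathcal{I}(\mathcal{E})$. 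This has the mild advantage of not relying on the regularity hypothesis at all---it works for any ideal $\mathcal{I}(\mathcal{E})\subset\mathcal{F}(\pi)$, regular or not---whereas the paper's proof is perhaps more transparent geometrically, since it makes visible that ``vanishing on $\mathcal{E}$'' is what is really being tested.
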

\begin{proof}
By the very definition of the ideal $\mathcal{I}^{\mathfrak{g}}(\mathcal{E})$ and the regularity assumption, the function $M$ belongs to it if and only if $M(\theta)=0$ for all $\theta\in\mathcal{E}$. Since 
\begin{equation*}M(\theta)=f^1(\theta)T_1+\dots+f^d(\theta)T_d\end{equation*}
 and the elements $T_1,\dots,T_d$ are linearly independent, it follows that $M(\theta)=0$ for all $\theta\in\mathcal{E}$ if and only if $f^1(\theta)=f^2(\theta)=\dots=f^d(\theta)=0$ for all $\theta\in\mathcal{E}$.\\
By the regularity assumption it is equivalent to $f^1,\dots,f^d\in\mathcal{I}(\mathcal{E})$.\end{proof}
Consequently, if $M=f^1T_1+\dots f^dT_d$ and $N=g^1T_1+\dots g^dT_d$ are the decompositions of two $\mathfrak{g}$-valued functions, then $\restr{M}{\mathcal{E}}=\restr{N}{\mathcal{E}}$ if and only if $\restr{f^i}{\mathcal{E}}=\restr{g^i}{\mathcal{E}}$ for all $i=1,\dots,d$. Thus, each equivalence class $\restr{M}{\mathcal{E}}$ has a unique decomposition of the form
\begin{equation*}\restr{M}{\mathcal{E}}=(\restr{f^1}{\mathcal{E}})T_1+\dots +(\restr{f^d}{\mathcal{E}})T_d\end{equation*}
 with respect to the fixed basis $T_1,\dots,T_d$.

The assertion stated in the following lemma will be crucial for the definition of the characteristic element of a ZCR:
\begin{lemma}\label{decomp_lm2}
A $\mathfrak{g}$-valued function $M\in\mathcal{F}^{\mathfrak g}(\pi)$ belongs to the ideal $\mathcal{I}^{\mathfrak{g}}(\mathcal{E})$ if and only if there exist $\mathfrak{g}$-valued functions $C_l^I\in\mathcal{F}^{\mathfrak g}(\pi)$ such that
\begin{equation*}M=\sum_{I,l}D_I(F^l)C^I_l.\end{equation*}
\end{lemma}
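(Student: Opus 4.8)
The plan is to reduce the $\mathfrak{g}$-valued statement to the scalar case already encoded in the regularity assumption, using the basis decomposition \eqref{decomp} exactly as in the proofs of Lemmas~\ref{decomp_lm} and~\ref{eulerdivLie}. The ``if'' direction is immediate: if $M=\sum_{I,l}D_I(F^l)C^I_l$ with $C^I_l\in\mathcal{F}^{\mathfrak g}(\pi)$, then each entry $M_{ij}=\sum_{I,l}D_I(F^l)(C^I_l)_{ij}$ is manifestly an element of $\mathcal{I}(\mathcal{E})$, since $\mathcal{I}(\mathcal{E})$ is a differential ideal, hence $M\in\mathcal{I}^{\mathfrak g}(\mathcal{E})$ by definition.

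For the ``only if'' direction, fix a basis $T_1,\dots,T_d$ of $\mathfrak g$ and write $M=f^1T_1+\dots+f^dT_d$. By Lemma~\ref{decomp_lm}, $M\in\mathcal{I}^{\mathfrak g}(\mathcal{E})$ forces $f^i\in\mathcal{I}(\mathcal{E})$ for every $i$. By the very definition of $\mathcal{I}(\mathcal{E})$ (equivalently, by the regularity condition), each $f^i$ can be written as $f^i=\sum_{I,l}h^{i,I}_l D_I(F^l)$ for suitable scalar differential functions $h^{i,I}_l\in\mathcal{F}(\pi)$. Substituting and collecting terms gives
$$
M=\sum_{i=1}^d\Bigl(\sum_{I,l}h^{i,I}_l D_I(F^l)\Bigr)T_i=\sum_{I,l}D_I(F^l)\Bigl(\sum_{i=1}^d h^{i,I}_l T_i\Bigr),
$$
where I have used that $D_I(F^l)$ is a scalar function and so commutes with the scalars $h^{i,I}_l$ and may be pulled out of the matrix combination. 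Setting $C^I_l:=\sum_{i=1}^d h^{i,I}_l T_i$ yields $C^I_l\in\mathcal{F}^{\mathfrak g}(\pi)$ (a scalar-coefficient combination of the basis matrices $T_i$) and $M=\sum_{I,l}D_I(F^l)C^I_l$, as required. Note that only finitely many $f^i$ are nonzero and each involves finitely many multiindices $I$, so the sums are finite and all objects genuinely lie in $\mathcal{F}^{\mathfrak g}(\pi)$.

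I do not expect any serious obstacle here; the statement is essentially a bookkeeping consequence of the already-established Lemma~\ref{decomp_lm} together with the regularity assumption. The one point that deserves care -- and the only thing resembling a ``hard part'' -- is the bilinearity juggling: one must be explicit that the scalar factors $D_I(F^l)$ can be factored out of a $\mathfrak g$-valued function precisely because scalar multiplication of matrices is being used, so that $\bigl(\sum_i h^{i,I}_l T_i\bigr)$ remains a well-defined element of $\mathcal{F}^{\mathfrak g}(\pi)$ rather than of the larger $\mathcal{F}^{\mathfrak{gl}(n)}(\pi)$. Since $\mathfrak g$ is a linear subspace of $\mathfrak{gl}(n)$, closure under real-linear combinations is automatic, so this causes no difficulty.
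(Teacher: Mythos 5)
Your proof is correct and follows essentially the same route as the paper: decompose $M$ in a basis of $\mathfrak{g}$, invoke Lemma~\ref{decomp_lm} to reduce to the scalar ideal $\mathcal{I}(\mathcal{E})$, and regroup the resulting sums into $\mathfrak{g}$-valued coefficients $C_l^I=\sum_i h^{i,I}_l T_i$. The only difference is cosmetic: you spell out the (easy) ``if'' direction entrywise, which the paper simply calls obvious.
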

\begin{proof}
Let $T_1,\dots,T_d$ be an arbitrary basis in $\mathfrak{g}$. Consider the decomposition \eqref{decomp} $M=f^1T_1+\dots+f^dT_d$ of $M$ with respect to this basis, where  $f^i\in\mathcal{F}(\pi)$, $i=1,\dots,d$.
 As $M\in\mathcal{I}^{\mathfrak{g}}(\mathcal{E})$, Lemma \ref{decomp_lm} implies that each  $f^i$ belongs to the ideal $\mathcal{I}(\mathcal{E})$.\\
 By definition of $\mathcal{I}(\mathcal{E})$, this means that for each $i=1,\dots,d$, there exist functions $g_{l}^{iI} \in \mathcal{F}(\pi)$ such that $f^i=\sum_{I,l}g_{l}^{iI}D_I(F^l)$. Substituting into the expression for $M$, we obtain:
 \begin{equation*}M=\sum_{i=1}^d\left(\sum_{I,l}g_{l}^{iI}D_I(F^l)\right)T_i=\sum_{I,l}D_I(F^l)\left(\sum_{i=1}^dg_{l}^{iI}T_i\right).\end{equation*}
Setting $C_l^I:=\sum_{i=1}^dg_{l}^{iI}T_i\in\mathcal{F}^{\mathfrak g}(\pi)$, we obtain the desired decomposition.\\
The opposite direction is obvious.
\end{proof}

Similarly to the case of $\mathcal{F}(\mathcal{E})$, the total derivative operators $D_x$ and $D_y$ are well defined on $\mathcal{F}^{\mathfrak{g}}(\mathcal{E})$:
\begin{equation*}D_x(\restr{M}{\mathcal{E}}):=\restr{(D_xM)}{\mathcal{E}},\ D_y(\restr{M}{\mathcal{E}}):=\restr{(D_yM)}{\mathcal{E}}\end{equation*} for any $\restr{M}{\mathcal{E}}\in\mathcal{F}^{\mathfrak{g}}(\mathcal{E})$.

 The operators $\widehat{D}_x^R$ and $\widehat{D}_y^R$, associated with an arbitrary function $R \in \mathcal{F}^{\mathfrak{g}}(\pi)$, can also be consistently restricted to $\mathcal{F}^{\mathfrak{g}}(\mathcal{E})$ as follows:
\begin{equation*}
\widehat{D}_x^R(\restr{M}{\mathcal{E}}) := \restr{(D_x M - \mathrm{ad}_R(M))}{\mathcal{E}}, \quad
\widehat{D}_y^R(\restr{M}{\mathcal{E}}) := \restr{(D_y M - \mathrm{ad}_R(M))}{\mathcal{E}}.
\end{equation*}
Finally, the operators $\widehat{D}_x^{\widetilde{R}}$ and $\widehat{D}_x^R$ (and similarly $\widehat{D}_y^{\widetilde{R}}$ and $\widehat{D}_y^R$) coincide on $\mathcal{F}^{\mathfrak{g}}(\mathcal{E})$ whenever $\restr{R}{\mathcal{E}} = \restr{\widetilde{R}}{\mathcal{E}}$, that is, for any $\restr{M}{\mathcal{E}} \in \mathcal{F}^{\mathfrak{g}}(\mathcal{E})$, we have:
\begin{equation*}
\widehat{D}_x^{\widetilde{R}}(\restr{M}{\mathcal{E}}) = \widehat{D}_x^R(\restr{M}{\mathcal{E}}), \quad
\widehat{D}_y^{\widetilde{R}}(\restr{M}{\mathcal{E}}) = \widehat{D}_y^R(\restr{M}{\mathcal{E}}).
\end{equation*}
Thus, any function $\restr{R}{\mathcal{E}} \in \mathcal{F}^{\mathfrak{g}}(\mathcal{E})$ defines the pair $\widehat{D}_x^{\restr{R}{\mathcal{E}}}, \widehat{D}_y^{\restr{R}{\mathcal{E}}}$ of well-defined operators on $\mathcal{F}^{\mathfrak{g}}(\mathcal{E})$.

Without any confusion, we will denote all the operators derived above just as $\widehat{D}_x^R$ and $\widehat{D}_y^R$. Their domain of definition is then clear from the context. 
\section{Conservation laws: classical results}\label{sec:CL}
This section provides the basic definitions of conservation laws and their generating functions for partial differential equations in two independent variables.
 The exposition slightly differs from the standard one, with the aim of making it more transparent that conservation laws can be viewed as special cases of zero-curvature representations (ZCRs), and that gauge-equivalent $\mathbb{R}$-valued ZCRs correspond precisely to equivalent conservation laws. 
We also recall classical results concerning conservation laws and their generating functions. These results will serve as reference points in the subsequent sections, where we attempt to generalize them to the setting of $\mathfrak{g}$-valued zero-curvature representations.
For a general theory of conservation laws, see e.g.~\cite{Boch, Kra17, Ol93, Wolf}.

Let $\mathcal{E}$ be a diffiety in two independent variables.
A \textit{conserved current} for $\mathcal{E}$ is understood to be the two-tuple $\restr{P}{\mathcal{E}}=(\restr{P_1}{\mathcal{E}},\restr{P_2}{\mathcal{E}})\in\mathcal{F}(\mathcal{E})\times\mathcal{F}(\mathcal{E})$ of smooth functions on $\mathcal{E}$ such that one (and hence every) pair of its representatives  satisfies the \textit{conservation law condition}
\begin{eqnarray}\label{CL}0=\restr{(D_xP_1+D_yP_2)}{\mathcal{E}}=\restr{\mathrm{Div} P}{\mathcal{E}}.\end{eqnarray} 
\begin{remark}
In the terminology of \cite{Ol93}, the equality \eqref{CL} is called the conservation law for $\mathcal{E}$. Two conservation laws
\begin{equation*}\restr{(D_xP_1+D_yP_2)}{\mathcal{E}}=0 \mbox{ \ and\  }\restr{(D_x\widetilde P_1+D_y\widetilde P_2)}{\mathcal{E}}=0\end{equation*}
where the pairs   
$
P = (P_1, P_2)$ and $\widetilde{P} = (\widetilde{P}_1, \widetilde{P}_2)$
represent the same conserved current, i.e.,  
$
\left. P \right|_{\mathcal{E}} = \left. \widetilde{P} \right|_{\mathcal{E}},
$
are then said to be \emph{equivalent under the equivalence of the first kind}.

\end{remark}
Contrary to the commonly used terminology, we will understand the term \emph{conservation law} (CL) below to refer to the entire equivalence class of conservation laws under the equivalence of the first kind as described in the preceding remark. This yields a one-to-one correspondence between conservation laws and conserved currents, allowing us to use the terms \emph{conservation law} and \emph{conserved current} interchangeably without any ambiguity.\\

In the general theory, a so-called
generating function  is assigned to each conservation law. In coordinates, it can be defined as follows: 

Let $(\restr{P_1}{\mathcal{E}}, \restr{P_2}{\mathcal{E}})$ be a conserved current (=conservation law) for $\mathcal{E}$. Consider an arbitrary extension  $(P_1,P_2)$ of it to $J^{\infty}(\pi)$ and rewrite the corresponding conservation law condition \eqref{CL} in the form 
\begin{equation}\label{CL2}\mathrm{Div} P=D_xP_1+D_yP_2= \sum_{I,l}(D_IF^l)C_l^I\ \ \mbox{ on } J^{\infty}(\pi),\end{equation}
where the coefficients $C_l^I\in\mathcal{F}(\pi)$ are some differential functions (it is possible; see the definition of the zero function on $\mathcal{E}$ in the first section).
Then the \textit{generating function} $\psi=(\psi_1,\dots,\psi_N)\in(\mathcal{F}(\mathcal{E}))^N$ is the $N$-tuple of functions on $\mathcal{E}$, where
\begin{equation}\psi_l:=\sum_{I}(-1)^{|I|}\restr{D_I(C_l^I)}{\mathcal{E}}\label{gen-funct}.\end{equation}
It is well known that each generating function $\psi$ is always a \textit{cosymmetry} for $\mathcal{E}$, that is, it satisfies the condition
\begin{equation}\label{cosym}
\sum_{I,l}(-1)^{|I|}\restr{D_I\left(\frac{\partial F^l}{\partial u^k_I}\psi_l\right)}{\mathcal{E}}=0 \ \mbox{ for all } k=1,\dots,m.
\end{equation}
However, not every cosymmetry, may be the generating function of a conservation law. Determining whether a cosymmetry genuinely generates a conservation law, based solely on the definition, is a highly nontrivial task. It ultimately amounts to verifying the existence of all functions involved in the definition \eqref{gen-funct} of the generating function -- a challenge further complicated by the fact that the number of such functions is, in general, not known in advance.
This task can be significantly simplified by making use of the fact that  every conservation law  can be written in the so-called \textit{characteristic form} (see \cite{Ol93}), which is obtained  by simply integrating by parts the equality \eqref{CL2}. We now state this fundamental result explicitly in a form adapted to our setting, cf. \cite{Boch,Kra17,Ol93}:

\begin{classicalresult}\label{prop1}
Let $(\restr{P_1}{\mathcal{E}}, \restr{P_2}{\mathcal{E}})$ be a conserved current for $\mathcal{E}$. Then there exists
its extension $(\widetilde{P}_1, \widetilde{P}_2)$ to $J^{\infty}(\pi)$
  and an $N$-tuple $Q = (Q_1, \dots, Q_N) \in \mathcal{F}^N$ such that  the identity
\begin{equation}\label{CL_char_form}D_x(\widetilde P_1)+D_y(\widetilde P_2)=\sum_{l=1}^NF^lQ_l,\end{equation}
is satisfied on $J^{\infty}(\pi)$.\\
The restriction $\restr{Q}{\mathcal{E}}=(\restr{Q_1}{\mathcal{E}},\dots,\restr{Q_N}{\mathcal{E}})$ is precisely the generating function of the conservation law corresponding to the conserved current $(\restr{P_1}{\mathcal{E}},\restr{P_2}{\mathcal{E}})=(\restr{\widetilde P_1}{\mathcal{E}},\restr{\widetilde P_2}{\mathcal{E}})$.
\end{classicalresult}

The pair $(\widetilde P_1,\widetilde P_2)\in\mathcal{F}(\pi)\times\mathcal{F}(\pi)$ of functions realizing the characteristic form of the conservation laws is referred to as \textit{a characteristic representative of the conserved current} $(\restr{P_1}{\mathcal{E}},\restr{P_2}{\mathcal{E}})$ while the $N$-tuple $Q=(Q_1,\dots,Q_N)\in\mathcal{F}(\pi)^{N}$ is called  the \textit{characteristic} corresponding to the characteristic representative $(\widetilde P_1,\widetilde P_2)$.

Thus, the generating function of any conservation law $(\restr{P_1}{\mathcal{E}},\restr{P_2}{\mathcal{E}})$ arises as the restriction of a characteristic associated with some characteristic representative of the conserved current $(\restr{P_1}{\mathcal{E}},\restr{P_2}{\mathcal{E}})$:

\begin{classicalresult}\label{prop3}
The $N$-tuple of functions \( \psi = (\psi_1, \dots, \psi_N) \in(\mathcal{F}(\mathcal{E}) )^N\) is the generating function of the conservation law $(\restr{P_1}{\mathcal{E}},\restr{P_2}{\mathcal{E}})$ if and only if there exists an extension 
\(( \tilde P_1, \tilde P_2) \) of the conserved current to $J^{\infty}(\pi)$ and the extension \( Q=(Q_1, \dots, Q_N ) \) of $\psi$ to $J^{\infty}(\pi)$ such that  condition~\eqref{CL_char_form} is satisfied on $J^{\infty}(\pi)$.
\end{classicalresult}

 This result significantly simplifies the aforementioned problem of determining whether a given cosymmetry $\psi$ is the generating function of some conservation law.  It is sufficient to determine whether $\psi$ admits an extension $Q$ to $J^{\infty}(\pi)$ such that equation~\eqref{CL_char_form} can be satisfied for some (a priori unknown) functions $P_1,P_2$. The exactness of the global variational complex of the bundle $\pi$ -- specifically, the identity $ \mathrm{ker}\ \mathbf{E}= \mathrm{im}\ \mathrm{Div}$ on $J^{\infty}(\pi)$ -- makes it possible to answer this question without explicitly constructing the unknown functions (for more details, see \cite{Boch,Ol93}):

\begin{classicalresult}\label{prop4}
The $N$-tuple of functions \( \psi = (\psi_1, \dots, \psi_N) \in(\mathcal{F}(\mathcal{E}) )^N\) is the generating function of a conservation law if and only if it is possible to find its extension
 \( Q=(Q_1, \dots, Q_N) \in \mathcal{F}(\pi)^N \) on $J^{\infty}(\pi)$
 such that the condition
 \begin{equation}\label{char_suf}
 0=\sum_{l=1}^N\sum_I(-1)^{|I|}D_I\left(\frac{\partial F^l}{\partial u_I^k}Q_l\right)+\sum_{l=1}^N\sum_I(-1)^{|I|}D_I\left(\frac{\partial Q^l}{\partial u_I^k}F_l\right)
 \end{equation}
  is satisfied identically on $J^{\infty}(\pi)$ for all $k=1,\dots,m$.
\end{classicalresult}

Conservation laws are usually considered modulo trivial ones. To introduce the corresponding equivalence relation, we introduce the following group action.

Let $R\in\mathcal{F}(\pi)$ be a differential function on $J^{\infty}(\pi)$ and $\restr{P}{\mathcal{E}}=(\restr{P_1}{\mathcal{E}},\restr{P_2}{\mathcal{E}})$ be a conserved current for $\mathcal{E}$. Then the two-tuple 
\begin{equation*}t_R(\restr{P}{\mathcal{E}})\equiv\restr{P^R}{\mathcal{E}}=(\restr{P_1^R}{\mathcal{E}},\restr{P_2^R}{\mathcal{E}}),\end{equation*} 
where\begin{equation*}P^R_1=P_1-D_yR \mbox{ \ and \ } P^R=P_2+D_xR,\end{equation*} is a conserved current again. 

It can be easily seen that the prescription $t:(R,\restr{P}{\mathcal{E}})\mapsto \restr{P^R}{\mathcal{E}}$ provides us with an action of the additive group $\mathcal{F}(\pi)$ on the set of all conserved currents for $\mathcal{E}$. The conserved currents lying in the same orbit are called \textit{equivalent conserved currents} and the conservation laws corresponding to two equivalent conserved currents are then said to be $\textit{equivalent conservation laws}$. 
The conserved current (or conservation law) is called \textit{trivial}, if it is \textit{equivalent} to the conserved current $(\restr{0}{\mathcal{E}},\restr{0}{\mathcal{E}})$. 

Thus, the two-tuple $(P_1,P_2)\in\mathcal{F}(\pi)\times\mathcal{F}(\pi)$ of differential functions that satisfy the condition \eqref{CL} represents the trivial conserved current (conservation law) if and only if there exists a differential function $R\in\mathcal{F}(\pi)$ and two differential functions $X,Y\in\mathcal{I}(\mathcal{E})$ such that \begin{equation*}(P_1,P_2)=(-D_yR+X,D_xR+Y).\end{equation*}
\begin{remark}
This aligns exactly with the equivalence of conservation laws that encompasses both the \textit{first-kind} and \textit{second-kind} equivalences as is defined in \cite{Ol93, Wolf}.
\end{remark}

The following proposition describes the behaviour of a characteristic representative of a conserved current and its corresponding characteristic under the action of the previously introduced transformation $t_R$ applied to a conserved current. The statement is a reformulation of well-known results presented e.g. in \cite {Boch,Kra17,Ol93}:
\begin{classicalresult}\label{prop_CL_char}
Let the pair of differential functions $(P_1, P_2)$ be a characteristic representative of the conserved current $(\restr{P_1}{\mathcal{E}}, \restr{P_2}{\mathcal{E}})$, and let $R \in \mathcal{F}(\pi)$ be a differential function. Then the pair $P^R = (P_1-D_y R,\, P_2+D_x R)$ is a characteristic representative of the transformed conserved current
\begin{equation*}(\restr{P^R_1}{\mathcal{E}}, \restr{P^R_2}{\mathcal{E}}) = (\restr{P_1 - D_y R}{\mathcal{E}}, \restr{P_2 + D_x R}{\mathcal{E}}),\end{equation*}
with the corresponding characteristic remaining unchanged.
\end{classicalresult}

Thus, the generating function is an object that is well-defined for the whole equivalence class of conserved currents (see also \cite{Boch}). Moreover, it is well known that the generating function determines the equivalence class of conserved currents uniquely (\cite{Boch,Ol93})
\begin{remark}
In the general theory (in the two-dimensional base setting), conservation laws are understood to be closed horizontal differential $1$-forms on the diffiety $\mathcal{E}$ (see \cite{Boch, Kra17}). The conserved currents described above then appear as coordinate coefficients of such forms, and the conservation law condition \eqref{CL} expresses precisely the condition that the form is closed.
The equivalence classes of conserved currents introduced above correspond exactly to the first cohomology classes of the horizontal de Rham complex on $\mathcal{E}$. Trivial conserved currents thus correspond to exact horizontal differential $1$-forms on $\mathcal{E}$.

The generating function of a conservation law is, in general, defined as the image of the first cohomology group of the de Rham complex on $\mathcal{E}$ under a certain map arising in the first `sheet' of the $\mathcal{C}$-spectral sequence (see \cite{Boch}). However, this map acts on entire cohomology classes, and hence it is well defined on any individual representative of such a class. It follows that one can correctly define the generating function for each conserved current individually, as we have done above.
\end{remark}

Finally, we emphasize that in the standard theory of conservation laws, a conservation law is typically regarded as an entire equivalence class of conserved currents, as described above. However, in view of our aim to draw an analogy between the theory of conservation laws and zero-curvature representations, we shall consistently \textit{distinguish between individual conserved currents as defined above and their corresponding equivalence classes.}

\section{The zero curvature representation}\label{sec:ZCR}
In this section, we introduce the basic definitions and results concerning $\mathfrak{g}$-valued zero-curvature representations, their characteristic elements, and gauge transformations. Our goal is to lay a solid foundation for generalizing classical results known for conservation laws, which, however, require extending the considered $\mathfrak{g}$-valued functions from the diffiety $\mathcal{E}$ to the entire infinite jet space $J^{\infty}(\pi)$. We thus follow, more or less, the exposition of ZCRs presented in~\cite{Ma92}, however, everything is reformulated in terms of jet coordinates. For the sake of completeness and rigor, we also provide proofs of several statements that are not always immediately obvious.\\

Let $\mathcal{E}$ be a diffiety in two independent variables, 
let $\mathfrak{g}\subset\mathfrak{gl}(n)$ be an arbitrary matrix Lie algebra. A \textit{$\mathfrak{g}$-valued zero-curvature representation} (ZCR) for $\mathcal{E}$ is  defined to be the pair $\restr{(A,B)}{\mathcal{E}}:=(\restr{A}{\mathcal{E}},\restr{B}{\mathcal{E}})\in\mathcal{F}^{\mathfrak{g}}(\mathcal{E})\times\mathcal{F}^{\mathfrak{g}}(\mathcal{E})$ of the  $\mathfrak{g}$-valued functions on $\mathcal{E}$ that satisfy the \textit{Maurer--Cartan condition} 
 \begin{equation}\label{ZCR}
\restr{(D_yA-D_xB+[A,B])}{\mathcal{E}}=0.
\end{equation}
\begin{lemma}
The above definition is well-defined; that is, the condition~\eqref{ZCR} depends only on the equivalence classes $\restr{A}{\mathcal E}$ and $\restr{B}{\mathcal E}$, not on the particular representatives $A,B\in \mathcal{F}^{\mathfrak{g}}(\pi)$.
\end{lemma}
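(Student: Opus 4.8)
The plan is to show that if $(A,B)$ and $(\widetilde A,\widetilde B)$ are two pairs of representatives with $\restr{A}{\mathcal E}=\restr{\widetilde A}{\mathcal E}$ and $\restr{B}{\mathcal E}=\restr{\widetilde B}{\mathcal E}$, then the $\mathfrak g$-valued function $\Phi:=D_yA-D_xB+[A,B]$ and its tilded analogue $\widetilde\Phi$ differ by an element of $\mathcal I^{\mathfrak g}(\mathcal E)$, so that $\restr{\Phi}{\mathcal E}=0$ if and only if $\restr{\widetilde\Phi}{\mathcal E}=0$. Write $A=\widetilde A+P$ and $B=\widetilde B+R$ where, by definition of the factor Lie algebra, $P,R\in\mathcal I^{\mathfrak g}(\mathcal E)$.

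First I would expand, using the derivation properties of $D_x,D_y$ with respect to the Lie bracket and their linearity recalled in Section~1:
\begin{align*}
\Phi-\widetilde\Phi &= D_yP-D_xR+[\widetilde A,R]+[P,\widetilde B]+[P,R].
\end{align*}
It therefore suffices to check that each of the five terms on the right lies in $\mathcal I^{\mathfrak g}(\mathcal E)$. For $D_yP$ and $D_xR$ this follows because $\mathcal I^{\mathfrak g}(\mathcal E)$ is, by construction, a \emph{differential} ideal: if every entry $P_{ij}\in\mathcal I(\mathcal E)$ and $\mathcal I(\mathcal E)$ is closed under $D_x$ and $D_y$ (which is immediate from its definition as the ideal generated by all $D_I(F^l)$, since $D_x D_I(F^l)=D_{Ix}(F^l)$ and the Leibniz rule keeps us inside the ideal), then the entries $(D_yP)_{ij}=D_y(P_{ij})$ again lie in $\mathcal I(\mathcal E)$, hence $D_yP\in\mathcal I^{\mathfrak g}(\mathcal E)$, and likewise $D_xR$.

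For the bracket terms the cleanest argument uses Lemma~\ref{decomp_lm2}: since $P\in\mathcal I^{\mathfrak g}(\mathcal E)$ we may write $P=\sum_{I,l}D_I(F^l)\,C^I_l$ with $C^I_l\in\mathcal F^{\mathfrak g}(\pi)$, and then $[P,\widetilde B]=\sum_{I,l}D_I(F^l)\,[C^I_l,\widetilde B]$ (the scalar factor $D_I(F^l)\in\mathcal F(\pi)$ pulls out of the bracket by bilinearity and the pointwise definition of the operations), which is again of the form covered by Lemma~\ref{decomp_lm2} because each $[C^I_l,\widetilde B]\in\mathcal F^{\mathfrak g}(\pi)$; hence $[P,\widetilde B]\in\mathcal I^{\mathfrak g}(\mathcal E)$. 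The same reasoning handles $[\widetilde A,R]$ and $[P,R]$. Consequently $\Phi-\widetilde\Phi\in\mathcal I^{\mathfrak g}(\mathcal E)$, i.e. $\restr{\Phi}{\mathcal E}=\restr{\widetilde\Phi}{\mathcal E}$ in $\mathcal F^{\mathfrak g}(\mathcal E)$, which is exactly the assertion.

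There is no real obstacle here; the only point requiring a little care is the bookkeeping that $\mathcal I^{\mathfrak g}(\mathcal E)$ is simultaneously a Lie ideal (absorbs brackets with arbitrary elements of $\mathcal F^{\mathfrak g}(\pi)$) and a differential ideal (stable under $D_x,D_y$), and that the mixed term $[P,R]$ is absorbed as well — all of which follow formally from Lemma~\ref{decomp_lm2} together with the Leibniz-type identities listed in Section~1. One could alternatively phrase the whole proof ``entrywise,'' reducing everything to the fact that $\mathcal I(\mathcal E)$ is a differential ideal of $\mathcal F(\pi)$ and that products of entries land in it; I would mention this as a remark but carry out the argument at the $\mathfrak g$-valued level since it is shorter.
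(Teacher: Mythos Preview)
Your argument is correct and follows essentially the same route as the paper: write the two representatives as differing by elements of $\mathcal I^{\mathfrak g}(\mathcal E)$, expand the Maurer--Cartan expression, and observe that the difference lies in $\mathcal I^{\mathfrak g}(\mathcal E)$ because that subspace is both $D_x,D_y$-stable and a Lie ideal of $\mathcal F^{\mathfrak g}(\pi)$. The only cosmetic difference is that you spell out the ideal property via Lemma~\ref{decomp_lm2}, whereas the paper simply invokes it as a known fact.
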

\begin{proof}Let the pair $(A,B)$  of $\mathfrak{g}$-valued functions satisfy the condition \eqref{ZCR}. Let $(\widetilde A,\widetilde B)$ be another pair of  $\mathfrak{g}$-valued functions equivalent to $(A,B)$, that is $\restr{A}{\mathcal{E}}=\restr{\widetilde A}{\mathcal{E}}$ and $\restr{B}{\mathcal{E}}=\restr{\widetilde B}{\mathcal{E}}$. We claim that the pair $(\tilde A,\tilde B)$ satisfies the condition \eqref{ZCR} as well. \\
Indeed, according to the definition, 
we have  $\widetilde A=A+X$ and $\widetilde B=B+Y$ where $X,\ Y$ are some $\mathfrak{g}$-valued functions from $\mathcal{I}^{\mathfrak{g}}(\mathcal{E})$.
Then \begin{align*}(D_y\tilde A-D_x\tilde B+[\tilde A,\tilde B])&=(D_y(A+X)-D_x(B+Y)+[A+X,B+Y])\\
&=D_yA-D_xB+[A,B]+(D_yX-D_xY+[A,Y]+[X,B]+[X,Y])\\
&=D_yA-D_xB+[A,B]+Z.\end{align*}
The term $Z=(D_yX-D_xY+[A,Y]+[X,B]+[X,Y])$ lies in $\mathcal{I}^{\mathfrak{g}}(\mathcal{E})$ since the operators $D_x$ and $D_y$ obviously map matrices from $\mathcal{I}^{\mathfrak{g}}(\mathcal{E})$ to $\mathcal{I}^{\mathfrak{g}}(\mathcal{E})$, and $\mathcal{I}^{\mathfrak{g}}(\mathcal{E})$ is an ideal in $\mathcal{F}^{\mathfrak{g}}(\pi)$.  Thus, $\restr{(D_y\tilde A-D_x\tilde B+[\tilde A,\tilde B])}{\mathcal{E}}=\restr{(D_yA-D_xB+[A,B])}{\mathcal{E}}=0$, that was to be verified.

\end{proof}

\begin{remark}
In the case where $\mathfrak{g}=\mathbb{R}$ and $(\restr{A}{\mathcal{E}},\restr{B}{\mathcal{E}})$ is an $\mathbb{R}$-valued ZCR, the pair $(-\restr{B}{\mathcal{E}},\restr{A}{\mathcal{E}}) $ is a conserved current. Conversely, whenever $(\restr{P_1}{\mathcal{E}},\restr{P_2}{\mathcal{E}})$ is a conserved current for $\mathcal{E}$, the pair $(-\restr{P_2}{\mathcal{E}},\restr{P_1}{\mathcal{E}})$ defines an $\mathbb{R}$-valued ZCR.  Based on this one-to-one correspondence, conserved currents can be regarded as a special case of zero-curvature representations.
\end{remark}

Similarly to the case of conservation laws, in the context of zero-curvature representations we will also need to rely on the regularity assumption for our equation in order to extend the identity~\eqref{ZCR} to the entire jet space. According to Lemma \ref{decomp_lm2} the identity \eqref{ZCR} is equivalent to
\begin{equation}\label{ZCR2}
D_yA-D_xB+[A,B]=\sum_{I,l}D_I(F^l)\cdot C_l^I,
\end{equation} 
where  \( C_l^I \in \mathcal{F}^{\mathfrak{g}}(\pi) \), and the identity holds on $J^{\infty}(\pi)$.

Let \( (\restr{A}{\mathcal{E}}, \restr{B}{\mathcal{E}}) \) be a fixed $\mathfrak{g}$-valued zero-curvature representation for $\mathcal{E}$, and let \( (A, B) \) be an arbitrary representative of this ZCR.  
From now on, we denote the associated operators \( \widehat{D}_x^{A}\) ($\equiv\widehat{D}_x^{\restr{A}{\mathcal{E}}}$) and \( \widehat{D}_y^B \) ($\equiv\widehat{D}_y^{\restr{B}{\mathcal{E}}}$)  on \( \mathcal{F}^{\mathfrak{g}}(\mathcal{E}) \) as follows:
\begin{equation*}
\widehat{D}_x := \widehat{D}_x^A \quad \text{and} \quad \widehat{D}_y := \widehat{D}_y^B.
\end{equation*}
It is straightforward to verify that these operators commute on \( \mathcal{F}^{\mathfrak{g}}(\mathcal{E}) \).  
Therefore, when working on \( \mathcal{F}^{\mathfrak{g}}(\mathcal{E}) \), we introduce a simplified notation for their compositions:  
given a multiindex \( I = (a,b) \), we define (see \cite{Ma97})
\begin{equation}
\widehat{D}_I = \underbrace{\widehat{D}_{x}\circ\dots\circ\widehat{D}_{x}}_{a \text{ times}}\circ\underbrace{\widehat{D}_{y}\circ\dots\circ\widehat{D}_{y}}_{b \text{ times}}\
:= (\widehat{D}_x)^a\, (\widehat{D}_y)^b,
\end{equation}
where the operator \( \widehat{D}_x \) is applied \( a \) times, and \( \widehat{D}_y \) is applied \( b \) times.

In general, given a \( \mathfrak{g} \)-valued ZCR \( \restr{(A,B)}{\mathcal{E}} \) for \( \mathcal{E} \), one can associate to it the so-called linear \textit{gauge complexes}.  
An important object that arises -- up to isomorphism of complexes -- as a particular representative of the first cohomology class of the first linear gauge complex is referred to as the \textit{characteristic element} of the ZCR (for more details, see \cite{Ma92}). Its general definition is rather involved, and we shall not present it here.  

However, as stated in \cite{Ma92}, when \( \restr{(A,B)}{\mathcal{E}} \) is a \( \mathfrak{g} \)-valued ZCR whose extension to $J^{\infty}(\pi)$ admits a decomposition of the form ~\eqref{ZCR2} (according to Lemma \ref{decomp_lm2} such decomposition always exists), the corresponding characteristic element \( \chi \) is given explicitly by the \( N \)-tuple
\begin{equation*}
\chi = \left( \restr{K_1}{\mathcal{E}}, \dots, \restr{K_N}{\mathcal{E}} \right) \in \left( \mathcal{F}^{\mathfrak{g}}(\mathcal{E}) \right)^N,
\end{equation*}
where each component $K_l$ is defined by
\begin{equation} \label{char-el}
K_l = \sum_I (-1)^{|I|} \widehat{D}_I C_l^I.
\end{equation}

It was proved in \cite{Ma92} that, given a \( \mathfrak{g} \)-valued ZCR, the corresponding characteristic element \( \chi \) always satisfies the condition
\begin{equation} \label{Nec}
\sum_{l,I} (-1)^{|I|} \restr{ \widehat{D}_I \left( \frac{\partial F^l}{\partial u_I^k} \chi_l \right) }{\mathcal{E}} = 0
\end{equation}
for all \( k = 1, \dots, m \). This result reflects the cohomological origin of the characteristic element.

\begin{remark}
Let \( \mathfrak{g} = \mathbb{R} \), that is, the \( \mathfrak{g} \)-valued ZCR uniquely corresponds to a conserved current for \( \mathcal{E} \).  
In this case, the \textit{characteristic element} \( \chi \) coincides with the generating function \eqref{gen-funct} of the corresponding conservation law, and condition~\eqref{Nec} reduces to condition~\eqref{cosym}.
\end{remark}
For evolutionary equations, the characteristic element was discovered independently -- without any cohomological interpretation -- in \cite{Sa95}. \\

Let \( \mathcal{G} \subset GL(n) \) be a connected matrix Lie group with corresponding matrix Lie algebra \( \mathfrak{g} \).~A \textit{\( \mathcal{G} \)-valued function} on \( J^{\infty}(\pi) \) is a smooth function \( H: J^{\infty}(\pi) \to \mathcal{G} \) that depends on only finitely many variables \( u_I^k \).  
Such a function can be viewed as a matrix \( H \) whose entries are differential functions satisfying \( H(\theta) \in \mathcal{G} \) for all \( \theta \in J^{\infty}(\pi) \).  
The set of all \( \mathcal{G} \)-valued functions is denoted by \( \mathcal{F}^{\mathcal{G}}(\pi) \), and it naturally inherits the group structure under point-wise operations:
\begin{equation*}
H^{-1}(\theta):=\left(H(\theta)\right)^{-1},\ (H_1\cdot H_2)(\theta):=H_1(\theta)\cdot H_2(\theta)\mbox{ for all }\theta\in J^{\infty}(\pi).\end{equation*}
Obviously, we have the inclusion $\mathcal{F}^{\mathcal{G}}(\pi)\subset\mathcal{F}^{\mathfrak{gl}(n)}(\pi)$; hence the total derivative operators $D_x,D_y: \mathcal{F}^{\mathcal{G}}(\pi)\to\mathcal{F}^{\mathfrak{gl}(n)}(\pi)$ are well defined. Differentiating the identity $H\cdot H^{-1}=E$ yields
\begin{equation}\label{der-inv}D_x(H^{-1})=-H^{-1}\cdot D_x(H)\cdot H^{-1},\ \ D_y(H^{-1})=-H^{-1}\cdot D_y(H)\cdot H^{-1}.\end{equation}

Furthermore, a standard result from the theory of Lie groups and Lie algebras implies that the Lie algebra \(\mathcal{F}^\mathfrak{g}(\pi) \) is closed under conjugation by elements of the corresponding Lie group \( \mathcal{F}^\mathcal{G}(\pi) \). Similarly, it is well-known that for any $H\in\mathcal{F}^{\mathcal{G}}(\pi)$, the elements $(D_x H) H^{-1}$ and $(D_y H) H^{-1}$ belong to $\mathcal{F}^{\mathfrak{g}}(\pi)$, as they arise as coefficients of the right-translation of the horizontal differential of the map $H:J^{\infty}(\pi)\to\mathcal{G}$. For a more detailed discussion, see, for instance, \cite{Ma92}.

Hence
any \( \mathcal{G} \)-valued function \( H \in \mathcal{F}^{\mathcal{G}}(\pi) \) induces a \textit{gauge transformation} 
\begin{equation*}\mathcal{F}^{\mathfrak{g}}(\pi) \times \mathcal{F}^{\mathfrak{g}}(\pi)\to\mathcal{F}^{\mathfrak{g}}(\pi) \times \mathcal{F}^{\mathfrak{g}}(\pi),\ (A,B)\mapsto (A^H,B^H)\end{equation*}
given by
\begin{equation}\label{gauge-def}
A^H = D_x(H) H^{-1} + H A H^{-1}, \qquad
B^H = D_y(H) H^{-1} + H B H^{-1}.
\end{equation}

Passing to equivalence classes, this yields the transformation
\begin{equation*} t_H :\mathcal{F}^{\mathfrak{g}}(\mathcal{E}) \times \mathcal{F}^{\mathfrak{g}}(\mathcal{E})\to\mathcal{F}^{\mathfrak{g}}(\mathcal{E}) \times \mathcal{F}^{\mathfrak{g}}(\mathcal{E}),\  
 \left( \restr{A}{\mathcal{E}}, \restr{B}{\mathcal{E}} \right) \mapsto \left( \restr{A^H}{\mathcal{E}}, \restr{B^H}{\mathcal{E}} \right),
\end{equation*}
as witnessed by the following lemma.

\begin{lemma} For every $H \in \mathcal{F}^{\mathcal{G}}(\pi)$ the map
\begin{equation*} t_H :\mathcal{F}^{\mathfrak{g}}(\mathcal{E}) \times \mathcal{F}^{\mathfrak{g}}(\mathcal{E})\to\mathcal{F}^{\mathfrak{g}}(\mathcal{E}) \times \mathcal{F}^{\mathfrak{g}}(\mathcal{E}),\  
 \left( \restr{A}{\mathcal{E}}, \restr{B}{\mathcal{E}} \right) \mapsto \left( \restr{A^H}{\mathcal{E}}, \restr{B^H}{\mathcal{E}} \right)\end{equation*} 
 defined by \eqref{gauge-def} is well-defined; that is, it depends only on the equivalence classes \(\restr{A}{\mathcal{E}}\) and \(\restr{B}{\mathcal{E}}\), not on the particular representatives \(A,B\in\mathcal{F}^{\mathfrak{g}}(\pi)\).
\end{lemma}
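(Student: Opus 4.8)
The plan is to reduce the claim to the observation that the inhomogeneous term $D_x(H)H^{-1}$ in \eqref{gauge-def} does not depend on $A$ at all, so that changing the representative only perturbs the \emph{conjugated} term $HAH^{-1}$, and this perturbation stays inside the ideal $\mathcal{I}^{\mathfrak{g}}(\mathcal{E})$. Concretely, suppose $(A,B)$ and $(\widetilde A,\widetilde B)$ are two representatives of the same pair of equivalence classes, so that $\widetilde A = A + X$ and $\widetilde B = B + Y$ with $X,Y \in \mathcal{I}^{\mathfrak{g}}(\mathcal{E})$. Substituting into \eqref{gauge-def} and using that conjugation by $H$ is $\mathbb{R}$-linear in the middle factor, I would obtain
$$
\widetilde A^{H} = D_x(H)H^{-1} + H(A+X)H^{-1} = A^{H} + HXH^{-1},
$$
and likewise $\widetilde B^{H} = B^{H} + HYH^{-1}$. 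Thus the whole statement comes down to showing $HXH^{-1}\in\mathcal{I}^{\mathfrak{g}}(\mathcal{E})$ whenever $X\in\mathcal{I}^{\mathfrak{g}}(\mathcal{E})$ (and similarly for $Y$).

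\textbf{Key step.} To establish $HXH^{-1}\in\mathcal{I}^{\mathfrak{g}}(\mathcal{E})$ I would argue in two stages. First, $HXH^{-1}$ is again $\mathfrak{g}$-valued: this is exactly the closure of $\mathcal{F}^{\mathfrak{g}}(\pi)$ under conjugation by elements of $\mathcal{F}^{\mathcal{G}}(\pi)$ recalled above (a pointwise consequence of the fact that $\mathcal{G}$ acts on $\mathfrak{g}$ by the adjoint representation). Second, its entries lie in $\mathcal{I}(\mathcal{E})$: by Lemma~\ref{decomp_lm2} we may write $X = \sum_{I,l} D_I(F^l)\, C_l^{I}$ with $C_l^{I}\in\mathcal{F}^{\mathfrak{g}}(\pi)$, whence
$$
HXH^{-1} = \sum_{I,l} D_I(F^l)\,\bigl(H C_l^{I} H^{-1}\bigr),
$$
and since each $H C_l^{I} H^{-1}$ is again $\mathfrak{g}$-valued by the same closure property, the right-hand side is of the form required by Lemma~\ref{decomp_lm2}, so $HXH^{-1}\in\mathcal{I}^{\mathfrak{g}}(\mathcal{E})$. (Alternatively, one can simply note that each entry of $HXH^{-1}$ is an $\mathcal{F}(\pi)$-linear combination of entries of $X$, which all lie in the ideal $\mathcal{I}(\mathcal{E})$, hence so does each entry of $HXH^{-1}$.) Combining, $\restr{\widetilde A^{H}}{\mathcal{E}} = \restr{A^{H}}{\mathcal{E}}$ and $\restr{\widetilde B^{H}}{\mathcal{E}} = \restr{B^{H}}{\mathcal{E}}$, which is the assertion.

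\textbf{Expected difficulty.} There is essentially no analytic obstacle here; the proof is formal once the algebraic ingredients are in place. The only point that deserves care — and the one I would highlight — is that the derivative term $D_x(H)H^{-1}$ (resp.\ $D_y(H)H^{-1}$) is manifestly independent of the chosen representative, so, unlike in the proof that the Maurer--Cartan condition is representative-independent, no total derivatives of $X$ or $Y$ enter and one does \emph{not} need to invoke that $D_x,D_y$ preserve $\mathcal{I}^{\mathfrak{g}}(\mathcal{E})$; the single nontrivial input is the stability of $\mathcal{F}^{\mathfrak{g}}(\pi)$ (and of $\mathcal{I}^{\mathfrak{g}}(\mathcal{E})$) under conjugation by $\mathcal{F}^{\mathcal{G}}(\pi)$, which was recorded above.
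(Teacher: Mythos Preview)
Your proof is correct and follows essentially the same route as the paper's: both reduce to showing that $\widetilde A^{H}-A^{H}=HXH^{-1}$ (and the analogous identity for $B$) lies in $\mathcal{I}^{\mathfrak{g}}(\mathcal{E})$. You supply slightly more detail than the paper does on this last point---invoking Lemma~\ref{decomp_lm2} and the closure of $\mathcal{F}^{\mathfrak{g}}(\pi)$ under conjugation---where the paper simply asserts $HXH^{-1}\in\mathcal{I}^{\mathfrak{g}}(\mathcal{E})$ without further comment.
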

\begin{proof}
 Suppose that \( \restr{A}{\mathcal{E}} = \restr{\tilde{A}}{\mathcal{E}} \) and \( \restr{B}{\mathcal{E}} = \restr{\tilde{B}}{\mathcal{E}} \), that is, \( A = \tilde{A} + X \) and \( B = \tilde{B} + Y \) for some \( X, Y \in \mathcal{I}^{\mathfrak{g}}(\mathcal{E}) \).  
Then we have
\begin{equation*}
A^H - \tilde{A}^H = (D_xH)H^{-1}+H(\widetilde A+X)H^{-1}-(D_xH)H^{-1}-H\widetilde A H^{-1} =H X H^{-1}  \in \mathcal{I}^{\mathfrak{g}}(\mathcal{E})
\end{equation*} and
\begin{equation*}B^H - \tilde{B}^H = (D_yH)H^{-1}+H(\widetilde B+Y)H^{-1}-(D_yH)H^{-1}-H\widetilde B H^{-1} =H Y H^{-1}  \in \mathcal{I}^{\mathfrak{g}}(\mathcal{E})
\end{equation*}
That is, \( \restr{A^H}{\mathcal{E}} = \restr{\tilde{A}^H}{\mathcal{E}} \) and \( \restr{B^H}{\mathcal{E}} = \restr{\tilde{B}^H}{\mathcal{E}} \).
\end{proof}
The statement of the next lemma is well known; we include it, along with a proof, purely for the sake of completeness.
\begin{lemma}
Let $ \restr{(A,B)}{\mathcal{E}} $ be a $ \mathfrak{g} $-valued zero-curvature representation for $ \mathcal{E} $, and let $ H \in \mathcal{F}^{\mathcal{G}}(\pi) $.  
Then the pair  
\begin{equation*}
\left( \restr{A^{H}}{\mathcal{E}}, \; \restr{B^{H}}{\mathcal{E}} \right),
\end{equation*}
with
\begin{equation*}
A^{H} = D_{x}(H) H^{-1} + H A H^{-1}, \qquad
B^{H} = D_{y}(H) H^{-1} + H B H^{-1},
\end{equation*}
is again a $ \mathfrak{g} $-valued zero-curvature representation for $ \mathcal{E} $.
\end{lemma}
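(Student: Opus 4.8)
\emph{Plan of proof.} The statement is the standard gauge covariance of the curvature, and the plan is to establish it first on all of $J^{\infty}(\pi)$ and then descend to $\mathcal{E}$. Fix a representative $(A,B)\in\mathcal{F}^{\mathfrak{g}}(\pi)\times\mathcal{F}^{\mathfrak{g}}(\pi)$ of the given ZCR and put
\[
F:=D_yA-D_xB+[A,B]\in\mathcal{F}^{\mathfrak{g}}(\pi).
\]
By \eqref{ZCR} and Lemma~\ref{decomp_lm2} we have $F\in\mathcal{I}^{\mathfrak{g}}(\mathcal{E})$ (indeed $F=\sum_{I,l}D_I(F^l)\cdot C_l^I$ as in \eqref{ZCR2}). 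The goal is to prove the identity
\begin{equation}\label{gauge-curv}
D_yA^{H}-D_xB^{H}+[A^{H},B^{H}]=H\,F\,H^{-1}\qquad\text{on }J^{\infty}(\pi),
\end{equation}
with $A^{H},B^{H}$ as in \eqref{gauge-def}; everything below takes place in $\mathcal{F}^{\mathfrak{gl}(n)}(\pi)$, where $D_x,D_y$ and the Leibniz rule are available, even though several intermediate matrices are not $\mathfrak{g}$-valued.

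First I would treat the pure-gauge part. Set $a:=D_x(H)H^{-1}$ and $b:=D_y(H)H^{-1}$; both lie in $\mathcal{F}^{\mathfrak{g}}(\pi)$, as recalled before the statement. Differentiating and using $D_xD_y=D_yD_x$ together with \eqref{der-inv} gives
\[
D_ya=D_x(D_yH)H^{-1}-a\,b,\qquad D_xb=D_x(D_yH)H^{-1}-b\,a,
\]
so that $D_ya-D_xb+[a,b]=-ab+ba+[a,b]=0$.

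Next I would treat the conjugated part. Put $\widetilde A:=HAH^{-1}$ and $\widetilde B:=HBH^{-1}$; these lie in $\mathcal{F}^{\mathfrak{g}}(\pi)$ since $\mathcal{F}^{\mathfrak{g}}(\pi)$ is closed under conjugation by $\mathcal{F}^{\mathcal{G}}(\pi)$. Applying the Leibniz rule to the triple product and using \eqref{der-inv} (note $(D_yH)AH^{-1}=b\widetilde A$ and $HA\,D_y(H^{-1})=-\widetilde A\,b$) one gets
\[
D_y\widetilde A=[b,\widetilde A]+H(D_yA)H^{-1},\qquad D_x\widetilde B=[a,\widetilde B]+H(D_xB)H^{-1},
\]
and also $[\widetilde A,\widetilde B]=H[A,B]H^{-1}$. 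Now substitute $A^{H}=a+\widetilde A$, $B^{H}=b+\widetilde B$ into the left-hand side of \eqref{gauge-curv} and expand: the purely-$a,b$ terms give $D_ya-D_xb+[a,b]=0$ by the previous step; the mixed brackets combine as $[b,\widetilde A]+[\widetilde A,b]=0$ and $-[a,\widetilde B]+[a,\widetilde B]=0$; and the remaining terms are exactly $H(D_yA-D_xB+[A,B])H^{-1}=HFH^{-1}$. This proves \eqref{gauge-curv}.

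Finally, $HFH^{-1}$ is $\mathfrak{g}$-valued, and each of its entries is an $\mathcal{F}(\pi)$-linear combination of entries of $F$, which lie in $\mathcal{I}(\mathcal{E})$; since $\mathcal{I}(\mathcal{E})$ is an ideal in $\mathcal{F}(\pi)$, all entries of $HFH^{-1}$ lie in $\mathcal{I}(\mathcal{E})$, hence $HFH^{-1}\in\mathcal{I}^{\mathfrak{g}}(\mathcal{E})$. By \eqref{gauge-curv} this means $\restr{(D_yA^{H}-D_xB^{H}+[A^{H},B^{H}])}{\mathcal{E}}=0$, i.e.\ \eqref{ZCR} holds for $(A^{H},B^{H})$. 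Moreover each summand of $A^{H},B^{H}$ in \eqref{gauge-def} belongs to $\mathcal{F}^{\mathfrak{g}}(\pi)$, so $A^{H},B^{H}\in\mathcal{F}^{\mathfrak{g}}(\pi)$ and $(\restr{A^{H}}{\mathcal{E}},\restr{B^{H}}{\mathcal{E}})$ is a well-defined $\mathfrak{g}$-valued ZCR for $\mathcal{E}$. The proof is conceptually routine; the only real work is the Leibniz-rule bookkeeping in the conjugated part and verifying the pairwise cancellation of the mixed commutator terms, so that is where I would be most careful.
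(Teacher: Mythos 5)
Your proof is correct and follows essentially the same route as the paper: both establish the identity $D_yA^{H}-D_xB^{H}+[A^{H},B^{H}]=H\,(D_yA-D_xB+[A,B])\,H^{-1}$ on $J^{\infty}(\pi)$ by Leibniz-rule computation with \eqref{der-inv}, and then conclude by noting that conjugation by $H$ keeps the curvature inside $\mathcal{I}^{\mathfrak{g}}(\mathcal{E})$. Your splitting into the pure-gauge part $a,b$ and the conjugated part $\widetilde A,\widetilde B$ is merely a tidier bookkeeping of the paper's single expansion (and your entrywise argument that $HFH^{-1}\in\mathcal{I}^{\mathfrak{g}}(\mathcal{E})$ is a slightly more explicit version of the paper's closing remark).
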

\begin{proof}
By straightforward computations performed in $\mathcal{F}^{\mathfrak{gl}(n)}(\pi)$ (we omit the dots that indicate matrix multiplication) and by the use of the identities \eqref{der-inv} we have
\begin{align*}D_y(A^H)&=(D_yD_xH)H^{-1}+(D_x H)D_y(H^{-1})+(D_yH)AH^{-1}+H(D_yA)H^{-1}+HAD_y(H^{-1})\\
D_x(B^H)&=(D_xD_yH)H^{-1}+(D_y H)D_x(H^{-1})+(D_xH)BH^{-1}+H(D_xB)H^{-1}+HBD_x(H^{-1})\\
[A^H,B^H]&=(D_xH)H^{-1}(D_yH)H^{-1}+(D_xH)BH^{-1}+HAH^{-1}(D_yH)H^{-1}+HABH^{-1}\\
&-(D_yH)H^{-1}(D_xH)H^{-1}-(D_yH)AH^{-1}-HBH^{-1}(D_xH)H^{-1}-HBAH^{-1}\\
&=-(D_xH)(D_y(H^{-1}))+(D_xH)BH^{-1}-HA(D_y(H^{-1}))+HABH^{-1}\\
&+(D_yH)(D_x(H^{-1}))-(D_yH)AH^{-1}+HB(D_x(H^{-1}))-HBAH^{-1}\end{align*}
We thus have
\begin{equation*}D_yA^H-D_xB^H+[A^H,B^H]=H(D_yA-D_xB+[A,B])H^{-1}.\end{equation*}
Since, by assumption, $D_yA-D_xB+[A,B]$ belongs to $\mathcal{I}^{\mathfrak{g}}(\mathcal{E})$, and since $\mathcal{F}^{\mathfrak{g}}(\pi)$ is closed under the conjugation by elements from $\mathcal{F}^{\mathcal{G}}(\pi)$, it follows that the 
$\mathfrak{gl}(n)$-valued function $H(D_yA-D_xB+[A,B])H^{-1}$ belongs to $\mathcal{I}^{\mathfrak{g}}(\mathcal{E})$.\end{proof}

Each gauge transformation  thus preserves  the set of all \( \mathfrak{g} \)-valued ZCRs of \( \mathcal{E} \) invariant, and therefore, the formulas~\eqref{gauge-def} define an \textit{action }of the group $\mathcal{F}^{\mathcal{G}}(\pi)$ on the set of all \( \mathfrak{g} \)-valued ZCRs for \( \mathcal{E} \).

A \( \mathfrak{g} \)-valued ZCR \( \restr{(A, B)}{\mathcal{E}} \) for \( \mathcal{E} \) is said to be \textit{gauge equivalent} to the ZCR \( \restr{(\bar{A}, \bar{B})}{\mathcal{E}} \) if there exists a \( \mathcal{G} \)-valued function \( H \in \mathcal{F}^{\mathcal{G}}(\pi) \) such that
\begin{equation*}
\restr{\bar{A}}{\mathcal{E}} = \restr{A^H}{\mathcal{E}} \quad \text{and} \quad \restr{\bar{B}}{\mathcal{E}} = \restr{B^H}{\mathcal{E}},
\end{equation*}
where \( A^H \) and \( B^H \) are given by the formulas~\eqref{gauge-def}.
A ZCR that is gauge-equivalent to the zero ZCR \( (0, 0) \) is referred to as \textit{trivial}.  
That is, trivial \( \mathfrak{g} \)-valued ZCRs are precisely those of the form
\begin{equation*}
\left( \restr{D_x(H)H^{-1}}{\mathcal{E}},\ \restr{D_y(H)H^{-1}}{\mathcal{E}} \right),
\end{equation*}
where \( H \in \mathcal{F}^{\mathcal{G}}(\pi) \).

\begin{remark}\label{gaugeCL1}
In the case where \( \mathfrak{g} = \mathbb{R} \) and \( \mathcal{G} = \mathbb{R}^+ \), given \( H \in \mathcal{F}^{\mathbb{R}^+}(\pi) \), the corresponding gauge transformation reads as follows:
\begin{equation}\label{gaugeCL}
(\restr{A^H}{\mathcal{E}}, \restr{B^H}{\mathcal{E}}) = \left( \restr{D_x(H) H^{-1} + A}{\mathcal{E}},\ \restr{D_y(H) H^{-1} + B}{\mathcal{E}} \right)
= \left( \restr{D_x(\ln H) + A}{\mathcal{E}},\ \restr{D_y(\ln H) + B}{\mathcal{E}} \right),
\end{equation}
where the first equality follows from the commutativity of multiplication in \( \mathbb{R} \), and the second one from the fact that $H$ take values in $\mathbb{R}^+$.

On the other hand, if \( (\restr{A}{\mathcal{E}}, \restr{B}{\mathcal{E}}) \) is an \( \mathbb{R} \)-valued ZCR, then the pair \( (\restr{-B}{\mathcal{E}}, \restr{A}{\mathcal{E}}) \) corresponds to a conserved current. Let us consider the natural action of the additive group \( \mathcal{F}(\pi) \) on the set of conserved currents. Given \( R \in \mathcal{F}(\pi) \), the transformed conserved current is represented by the pair
\begin{equation*}
((-B)^R, A^R) = (-B - D_y R, A + D_x R).
\end{equation*}
Consequently, the associated transformed ZCR is given by the pair \( (A + D_x R, B + D_y R) \). Thus, the map
\begin{equation}\label{actZCR}
t_R \colon (\restr{A}{\mathcal{E}}, \restr{B}{\mathcal{E}}) \mapsto (A + D_x R, B + D_y R)
\end{equation}
defines an action of the additive group \( \mathcal{F}(\pi) \) on the set of all \( \mathbb{R} \)-valued ZCRs.

Note that this  group action is equivalent to the action of the multiplicative group $\mathcal{F}^{\mathbb{R}^+}(\pi)$    of positive-valued functions via gauge transformations \eqref{gaugeCL}. The two actions are related by the isomorphism
\begin{equation*}
\mathcal{F}^{\mathbb{R}^+}(\pi) \to \mathcal{F}^{\mathbb{R}}(\pi), \quad H \mapsto \ln H,
\end{equation*}
and hence yield the same orbits. In this sense, gauge-equivalent \( \mathbb{R} \)-valued ZCRs correspond to equivalent conservation laws.

\end{remark}

Finally, we recall the transformation behaviour of the characteristic element under gauge transformations of a ZCR. As shown in \cite{Ma92}, the characteristic element $\chi$ transforms by conjugation with the element $H$ when the ZCR is transformed via the gauge transformation $t_H$. More precisely, let $ \chi = \left( \restr{K_1}{\mathcal{E}}, \dots, \restr{K_N}{\mathcal{E}} \right) $ be the characteristic element of a $ \mathfrak{g} $-valued ZCR $ \restr{(A,B)}{\mathcal{E}} $ and let $ H \in \mathcal{F}^{\mathcal{G}}(\pi) $. Then the characteristic element 
\begin{equation*}
\chi^H = \left( \restr{K_1^H}{\mathcal{E}}, \dots, \restr{K_N^H}{\mathcal{E}} \right) \in \left( \mathcal{F}^{\mathfrak{g}}(\mathcal{E}) \right)^N
\end{equation*}
associated with the gauge-transformed ZCR $ \restr{(A^H, B^H)}{\mathcal{E}} $ satisfies
\begin{equation*}
\restr{K_j^H}{\mathcal{E}} = \restr{H K_j H^{-1}}{\mathcal{E}} \quad \text{for all } j = 1, \dots, N.
\end{equation*}

In contrast to the generating functions of conservation laws -- where each generating function uniquely determines the corresponding equivalence class of conserved currents (as defined in our setting) -- the situation is different for characteristic elements of zero-curvature representations. Two ZCRs may share the same characteristic element while not being gauge-equivalent (cf.~\cite{Ma92,Ma97}).
\begin{remark}
Since zero-curvature representations are typically studied up to gauge equivalence, the transformation property of the characteristic element described above is frequently exploited when searching for ZCRs of a given equation. The standard approach proceeds by first identifying all possible normal forms of the characteristic element under conjugation. One then seeks zero-curvature representations corresponding to these normal forms. All other ZCRs are gauge equivalent to one of these representatives. All such computations are carried out in internal coordinates on the given equation (see e.g. \cite{Ma92,Ma97, Ma02, Ma02-2,Se05,Se08}).
\end{remark}
\begin{remark}
In the general theory, a \( \mathfrak{g} \)-valued zero-curvature representation is defined as a horizontal differential \( 1 \)-form \( \alpha \) with coefficients in \( \mathfrak{g} \) that satisfies the Maurer--Cartan condition:
\begin{equation*}
\bar{d}\alpha = \frac{1}{2}[\alpha,\alpha],
\end{equation*}
where \( \bar{d} \) denotes the horizontal differential in the graded algebra $\mathfrak{g}\otimes\bar\Lambda\mathcal{E}$ of horizontal forms on $\mathcal{E}$ with coefficients in $\mathfrak{g}$. In coordinates, the form \( \alpha \) takes the shape \( \alpha = A\, \mathrm{d}x + B\, \mathrm{d}y \), where \( A \) and \( B \) are \( \mathfrak{g} \)-valued functions on \( \mathcal{E} \), and the Maurer--Cartan condition is then just \eqref{ZCR}. 

To each fixed $\mathfrak{g}$-valued ZCR $\alpha$, one can associate the so-called linear gauge complexes. The ZCR itself is then an element of the $0$th linear complex. Its image under the vertical differential is an element of the first linear complex and  uniquely determines a certain cohomology class. A particular isomorphic image of this class is the characteristic element, see \cite{Ma92} for more details.

In the case when $\alpha$ is an $\mathbb{R}$-valued ZCR -- that is, a conserved current -- the above-mentioned gauge complexes coincide with the columns of the zeroth `sheet' of the $\mathcal{C}$-spectral sequence for $\mathcal{E}$. This underlines the analogy with conservation laws.
\end{remark}

\section{The characteristic form of the ZCR representation}\label{sec:char-form}
In the preceding sections, we have developed the theory of conservation laws and the theory of ZCRs for equations in two independent variables in a manner that emphasizes the well-known fact that there is a one-to-one correspondence between conserved currents (resp. their equivalence classes)  and $\mathbb{R}$-valued ZCRs (resp. their equivalence classes under the gauge transformation), with the characteristic element of an $\mathbb{R}$-valued ZCR coinciding with the generating function of the corresponding conservation law.

The aim of this section is to (at least partially) address the question of whether the classical results concerning conservation laws, formulated in Section~2, can be extended to $\mathfrak{g}$-valued ZCRs for an arbitrary matrix Lie algebra $\mathfrak{g} \subset \mathfrak{gl}(n)$. Since the statements of Classical Result~2 and Classical Result~3 are direct consequences of Classical Result~1, we expect that if they can be extended at all to $\mathfrak{g}$-valued ZCRs, such extensions will follow as consequences of an appropriate generalization of Classical Result~1.

The following proposition can thus be regarded as the main result of this paper, as it extends the fundamental statement of Classical Result~\ref{prop1} on the characteristic form of conservation laws to the setting of  $\mathfrak{g}$-valued zero-curvature representations. A result of a similar nature for evolutionary equations -- formulated in internal coordinates on the equation manifold -- was obtained in \cite{Sa95}. In that work, the characteristic element of a ZCR was introduced as the function that emerges when the Maurer-Cartan condition is cast into a particular canonical form in those coordinates. Our present result confirms that, even for a general (not necessarily evolutionary) equation, the characteristic element of a ZCR still carries essential jet coordinate-level structural information about the ZCR.
\begin{proposition}\label{CharForm}
Let $\restr{(A,B)}{\mathcal{E}}$ be a $\mathfrak{g}$-valued ZCR. 
There exists its extension $(\tilde A,\tilde B)$ to  $J^{\infty}(\pi)$ and an $N$-tuple $(Q_1,\ \dots,Q_N)\in{(\mathcal{F}^{\mathfrak{g}}(\pi))}^{N}$ of $\mathfrak{g}$-valued functions on $J^{\infty}(\pi)$, 
such that the condition
\begin{equation}\label{charform}
D_y(\widetilde{A})-D_x(\widetilde{B})+[\widetilde A,\widetilde B]=\sum_{l=1}^N F^lQ_l 
\end{equation}
is satisfied on $J^{\infty}(\pi)$.\\
Moreover, the restriction $\restr{Q}{\mathcal{E}}=(\restr{Q_1}{\mathcal{E}},\dots,\restr{Q_N}{\mathcal{E}})$ is precisely the characteristic element of the $\mathfrak{g}$-valued ZCR $\restr{(A,B)}{\mathcal{E}}=\restr{(\widetilde A,\widetilde B)}{\mathcal{E}}$.

\end{proposition}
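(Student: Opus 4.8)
The strategy is to mimic the classical "integration by parts" argument that produces the characteristic form of a conservation law, but carried out entirely within the Lie algebra of $\mathfrak{g}$-valued functions. Starting from an arbitrary representative $(A,B)$ of the ZCR, Lemma~\ref{decomp_lm2} gives the decomposition~\eqref{ZCR2}, namely
\[
D_yA - D_xB + [A,B] = \sum_{I,l} D_I(F^l)\cdot C_l^I
\]
on $J^{\infty}(\pi)$, with $C_l^I \in \mathcal{F}^{\mathfrak{g}}(\pi)$. The right-hand side is a finite $\mathfrak{g}$-valued sum, and the plan is to repeatedly apply the Leibniz rule $D_x(f\cdot M) = D_x(f)\cdot M + f\cdot D_x(M)$ (and its $D_y$ analogue) to each term $D_I(F^l)\cdot C_l^I$ so as to move all total derivatives off $F^l$ and onto $C_l^I$, at the cost of introducing a total divergence. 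Concretely, for a multiindex $I$ of positive length, write $I = (i,J)$ with $i \in \{x,y\}$ and observe $D_I(F^l)\cdot C_l^I = D_i\big(D_J(F^l)\cdot C_l^I\big) - D_J(F^l)\cdot D_i(C_l^I)$. Iterating this reduction term by term collects, on the one hand, a genuine $\mathfrak{g}$-valued total divergence $\operatorname{Div}^{\mathfrak{g}}(X,Y) = D_xX + D_yY$ for some $X,Y \in \mathcal{F}^{\mathfrak{g}}(\pi)$, and on the other hand the undifferentiated combination $\sum_l F^l \cdot \big(\sum_I (-1)^{|I|} D_I(C_l^I)\big)$. Setting $Q_l := \sum_I (-1)^{|I|} D_I(C_l^I) \in \mathcal{F}^{\mathfrak{g}}(\pi)$, we arrive at
\[
D_yA - D_xB + [A,B] = D_xX + D_yY + \sum_{l=1}^N F^l Q_l .
\]

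The next step is to absorb the divergence term into a change of representative. Define $\widetilde A := A - Y$ and $\widetilde B := B + X$. Since $D_y\widetilde A - D_x\widetilde B = D_yA - D_xB - D_yY - D_xX$, the choice of signs is arranged precisely so that the divergence cancels; one must only check that the bracket term behaves correctly, i.e. that $[\widetilde A,\widetilde B] - [A,B]$ is what is needed. This is \emph{not} automatic --- here lies the one genuinely delicate point, because, unlike in the conservation-law case where the bracket is absent, modifying $A$ and $B$ changes $[A,B]$ by terms $[A,X] + [-Y,B] + [-Y,X]$ that are \emph{not} a divergence and need not vanish on $\mathcal{E}$. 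I expect this to be the main obstacle, and the way around it is to \emph{not} demand that $X,Y$ themselves be arbitrary outputs of the integration by parts, but to exploit that each $D_I(F^l)$ with $|I|\ge 1$ already lies in $\mathcal{I}(\mathcal{E})$ componentwise, hence the pieces $D_J(F^l)\cdot D_i(C_l^I)$ produced along the way --- and in particular the resulting $X,Y$ --- lie in $\mathcal{I}^{\mathfrak{g}}(\mathcal{E})$. Consequently $[A,X]$, $[Y,B]$, $[Y,X]$ all lie in $\mathcal{I}^{\mathfrak{g}}(\mathcal{E})$, so they can be rewritten (again via Lemma~\ref{decomp_lm2}) as $\sum_l F^l \cdot (\text{something in } \mathcal{F}^{\mathfrak{g}}(\pi))$ and folded into a redefinition of the $Q_l$. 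After this absorption, $\widetilde A, \widetilde B$ satisfy~\eqref{charform} with suitably adjusted $Q_l \in \mathcal{F}^{\mathfrak{g}}(\pi)$. Moreover $\restr{\widetilde A}{\mathcal{E}} = \restr{A}{\mathcal{E}}$ and $\restr{\widetilde B}{\mathcal{E}} = \restr{B}{\mathcal{E}}$ because $X,Y \in \mathcal{I}^{\mathfrak{g}}(\mathcal{E})$, so $(\widetilde A,\widetilde B)$ is indeed an extension of the given ZCR.

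It remains to identify $\restr{Q}{\mathcal{E}}$ with the characteristic element $\chi$ defined by~\eqref{char-el}. Comparing~\eqref{charform}, rewritten using $F^l = D_\emptyset(F^l)$ as a decomposition of the form~\eqref{ZCR2} with coefficient matrices $C_\emptyset^l = Q_l$ and all higher $C_I^l = 0$, the formula~\eqref{char-el} for the characteristic element of \emph{this particular} representative reads $K_l = \widehat{D}_\emptyset Q_l = Q_l$, whence $\chi = (\restr{Q_1}{\mathcal{E}}, \dots, \restr{Q_N}{\mathcal{E}})$. One should note here that the characteristic element is, by construction (and by the cohomological result recalled before~\eqref{Nec}), independent of the chosen representative and of the chosen decomposition~\eqref{ZCR2}; invoking this invariance — or, if one prefers a self-contained argument, checking directly that passing from the original decomposition with coefficients $C_l^I$ to the new one with coefficients $Q_l$ changes each $K_l$ only by an element of $\mathcal{I}^{\mathfrak{g}}(\mathcal{E})$, which follows from the same Leibniz-rule bookkeeping applied now to the $\widehat D_I$ in place of the $D_I$ — completes the proof. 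The routine verifications are the exact index bookkeeping in the iterated integration by parts and the sign tracking; the conceptual crux is the observation that all correction terms generated along the way live in $\mathcal{I}^{\mathfrak{g}}(\mathcal{E})$ and hence may be harmlessly reabsorbed into the $Q_l$, which is exactly what makes the nonabelian case go through just as the abelian one does.
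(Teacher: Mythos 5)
There is a genuine gap, and it sits exactly at the point you yourself flag as delicate. After the plain integration by parts you pass to $\widetilde A = A - Y$, $\widetilde B = B + X$, which leaves the commutator corrections $[A,X]-[Y,B]-[Y,X]$. You propose to dispose of them by noting that they lie in $\mathcal{I}^{\mathfrak{g}}(\mathcal{E})$ and can therefore "be rewritten (again via Lemma~\ref{decomp_lm2}) as $\sum_l F^l\cdot(\text{something})$". But Lemma~\ref{decomp_lm2} does not give that: membership in $\mathcal{I}^{\mathfrak{g}}(\mathcal{E})$ only yields a representation $\sum_{I,l} D_I(F^l)\,C^I_l$ involving total derivatives of the $F^l$ of arbitrary order, and a general element of the ideal is \emph{not} an $\mathcal{F}^{\mathfrak{g}}(\pi)$-linear combination of the undifferentiated $F^l$ alone (already $D_xF\in\mathcal{I}(\mathcal{E})$ cannot equal $F\cdot Q$: at a point of $J^{\infty}(\pi)$ where $F=0$ but $D_xF\neq 0$ such an identity fails). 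So the corrections cannot simply be "folded into the $Q_l$"; rewriting them by Lemma~\ref{decomp_lm2} reintroduces terms $D_I(F^l)\cdot C'^I_l$ with $|I|\geq 1$ and puts you back at the start. Read charitably as an iteration, every new integration by parts modifies $A$ and $B$ again and generates fresh commutator corrections, and you give no argument that this process terminates; no decreasing quantity is in sight.

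The paper's proof avoids this loop by performing the integration by parts not with $D_x,D_y$ but with the covariant operators $\widehat D_x^R=D_x-\mathrm{ad}_R$, $\widehat D_y^S=D_y-\mathrm{ad}_S$, via the identity $D_x(f)\cdot M=\widehat D_x^R(f\cdot M)-f\cdot\widehat D_x^R M$ (valid for arbitrary $R$), and only afterwards specializing $R=A$ and $S=B+B_1$, where $B_1$ is exactly the term to be absorbed into $B$. With this choice, expanding $\widehat D_x^A(B_1)+\widehat D_y^{B+B_1}(A_1)=D_xB_1+D_yA_1-[A,B_1]-[B+B_1,A_1]$ produces precisely the bracket corrections needed so that $D_y(A-A_1)-D_x(B+B_1)+[A-A_1,B+B_1]=\sum_l F^lQ_l$ holds identically on $J^{\infty}(\pi)$, with no leftover terms at all. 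A further payoff is that the resulting $Q_l=\sum_{a,b}(-1)^{a+b}(\widehat D_y^{B+B_1})^b(\widehat D_x^A)^a C_l^{(a,b)}$ visibly restricts on $\mathcal{E}$ to $\sum_I(-1)^{|I|}\widehat D_I C_l^I$, i.e.\ to the characteristic element as defined in~\eqref{char-el}, whereas a characteristic built from plain derivatives $D_I(C_l^I)$ differs from it by commutators that need not lie in the ideal in the nonabelian case. To repair your argument you would either have to prove termination of your iterative absorption (not obvious, and not attempted) or switch to the covariant integration by parts with the specific choice $R=A$, $S=B+B_1$ as above.
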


\begin{proof}
Firstly, note that the following identities are true on $J^{\infty}(\pi)$  for any $\mathfrak{g}$-valued functions $M,R,S\in\mathcal{F}^{\mathfrak{g}}(\pi)$ and any differential function $f\in\mathcal{F}(\pi)$ (see also \cite{Sa95}):
\begin{equation}\label{sa}D_x(f)\cdot M=\widehat D_x^R(f\cdot M)-f\cdot \widehat D_x^RM,\ \ \ D_y(f)\cdot M=\widehat D_y^S(f\cdot M)-f\cdot \widehat D_y^SM.\end{equation}
By induction we get the identities

\begin{equation*}
(D_x^af)\cdot M=\widehat D_x^R\left(\sum_{j=1}^{a}(-1)^{j-1}((D_x)^{a-j}f)\cdot((\widehat D_x^R)^{j-1} M)\right)+(-1)^af\cdot((\widehat D_x^R)^aM)
\end{equation*}
\begin{equation*}
(D_y^bf)\cdot M=\widehat D_y^S\left(\sum_{j=1}^{b}(-1)^{j-1}((D_y)^{b-j}f)\cdot((\widehat D_y^S)^{j-1} M)\right)+(-1)^bf\cdot((\widehat D_y^S)^bM)
\end{equation*}
that are true for all $a,b\in\mathbb{N}$, $a,b\geq 1$, hence,
for any such $a,b$ we have
\begin{align}
\lefteqn{(D_x^aD_y^bf)\cdot M}\nonumber\\
&=\left(D_x^a\left(D_y^bf\right)\right)\cdot M\nonumber\\
&=\widehat D_x^R\left(\sum_{j=1}^{a}(-1)^{j-1}((D_x)^{a-j}(D_y^bf))\cdot((\widehat D_x^R)^{j-1} M)\right)+(-1)^a(D_y^bf)\cdot((\widehat D_x^R)^aM)\nonumber\\
&=\widehat D_x^R\left(\sum_{j=1}^{a}(-1)^{j-1}((D_x)^{a-j}(D_y^bf))\cdot((\widehat D_x^R)^{j-1} M)\right)+\widehat D_y^S\left(\sum_{i=1}^{b}(-1)^{a+i-1}((D_y)^{b-i}f)\cdot((\widehat D_y^S)^{i-1} (\widehat D_x^R)^aM)\right)\nonumber\\
&+(-1)^{a+b}f\cdot((\widehat D_y^S)^b(\widehat D_x^R)^aM),\nonumber
\end{align}
where $R,S,M\in\mathcal{F}^{\mathfrak{g}}(\pi)$ are arbitrary $\mathfrak{g}$-valued functions on $J^{\infty}(\pi)$ and $f\in\mathcal{F}(\pi)$ is an arbitrary differential function on $J^{\infty}(\pi)$.

Now, let \( \restr{(A,B)}{\mathcal{E}} \) be a \( \mathfrak{g} \)-valued ZCR for \( \mathcal{E} \), and let \( (A,B) \) be an arbitrary representative of this equivalence class. Consider the extension of the Maurer--Cartan condition for \( (A,B) \) to \( J^{\infty}(\pi) \), as given by equality~\eqref{ZCR2}. Employing the identities derived above, we can now rewrite the right-hand side as follows:

\begin{align}
\sum_{I,l}D_I(F^l)\cdot C_l^I
&=\sum_{l=1}^N\left(F^l\cdot C_I^{(0,0)}+\sum_{a\geq 1} D_x^a(F^l)\cdot C_l^{(a,0)}+\sum_{b\geq 1} D_y^b(F^l)\cdot C_l^{(0,b)}+\sum_{a,b\geq 1}D_x^a(D_y^b(F^l))\cdot C_l^{(a,b)}\right)\nonumber\\
&=\sum_{l=1}^NF^l\cdot C_I^{(0,0)}\nonumber\\
&+\sum_{l=1}^N\sum_{a\geq 1}\left(
\widehat D_x^R\left(\sum_{j=1}^{a}(-1)^{j-1}((D_x)^{a-j}F^l)\cdot((\widehat D_x^R)^{j-1} C_l^{(a,0)})\right)+(-1)^aF^l\cdot((\widehat D_x^R)^aC_l^{(a,0)})
\right)\nonumber\\
&+\sum_{l=1}^N\sum_{b\geq 1}\left(
\widehat D_y^S\left(\sum_{j=1}^{b}(-1)^{j-1}((D_y)^{b-j}F^l)\cdot((\widehat D_y^S)^{j-1} C_l^{(0,b)})\right)+(-1)^bF^l\cdot((\widehat D_y^S)^bC_l^{(0,b)})
\right)\nonumber\\
&+\sum_{l=1}^N\sum_{a,b\geq 1}\left[\widehat D_x^R\left(\sum_{j=1}^{a}(-1)^{j-1}((D_x)^{a-j}(D_y^bF^l))\cdot((\widehat D_x^R)^{j-1} C_l^{(a,b)})\right)\right.\nonumber\\
&\left.+\widehat D_y^S\left(\sum_{i=1}^{b}(-1)^{a+i-1}((D_y)^{b-i}F^l)\cdot((\widehat D_y^S)^{i-1} (\widehat D_x^R)^aC_l^{(a,b)})\right)+(-1)^{a+b}F^l\cdot((\widehat D_y^S)^b(\widehat D_x^R)^aC_l^{(a,b)})\right]\nonumber\\
&=\widehat D_x^R\left(\sum_{l=1}^N\sum_{a\geq 1}\sum_{j=1}^{a}\sum_{b\geq 0}(-1)^{j-1}((D_x)^{a-j}(D_y^bF^l))\cdot((\widehat D_x^R)^{j-1} C_l^{(a,b)})\right)\label{B1}\\
&+\widehat D_y^S\left(\sum_{l=1}^N\sum_{b\geq 1}\sum_{j=1}^{b}\sum_{a\geq 0}(-1)^{a+j-1}((D_y)^{b-j}F^l)\cdot((\widehat D_y^{S})^{j-1} (\widehat D_x^R)^aC_l^{(a,b)})\right)\label{A1}\\
&+\sum_{l=1}^N F^l\cdot\left(\sum_{a,b\geq 0}(-1)^{a+b}((\widehat D_y^S)^b(\widehat D_x^R)^aC_l^{(a,b)})\right).\label{formulas}
\end{align}

where $R,S\in\mathcal{F}^{\mathfrak{g}}(\pi)$ are still arbitrary $\mathfrak{g}$-valued functions on $J^{\infty}(\pi)$
(all the sums are finite, thus we are allowed to rearange their terms arbitrarily).

 Since~\eqref{formulas} holds for all such \( R \) and \( S \), it in particular holds for the specific choice \( R = A \) and \( S = B + B_1 \), where

\begin{equation*}B_1:=\sum_{l=1}^N\sum_{a\geq 1}\sum_{j=1}^{a}\sum_{b\geq 0}(-1)^{j-1}((D_x)^{a-j}(D_y^bF^l))\cdot((\widehat D_x^A)^{j-1} C_l^{(a,b)})\end{equation*}
($B_1$  as well as $B+B_1$ are still $\mathfrak{g}$-valued functions on $J^{\infty}(\pi)$, as follows from the properties of $\mathcal{F}^{\mathfrak{g}}(\pi)$ stated in Section \ref{sec:1}). We thus have
\begin{align}
\sum_{I,l}D_I(F^l)\cdot C_l^I&= \widehat D_x^A\left(\sum_{l=1}^N\sum_{a\geq 1}\sum_{j=1}^{a}\sum_{b\geq 0}(-1)^{j-1}((D_x)^{a-j}(D_y^bF^l))\cdot((\widehat D_x^A)^{j-1} C_l^{(a,b)})\right)\nonumber\\
&+\widehat D_y^{B+B_1}\left(\sum_{l=1}^N\sum_{b\geq 1}\sum_{j=1}^{b}\sum_{a\geq 0}(-1)^{a+j-1}((D_y)^{b-j}F^l)\cdot((\widehat D_y^{B+B_1})^{j-1} (\widehat D_x^A)^aC_l^{(a,b)})\right)\nonumber\\
&+\sum_{l=1}^N F^l\cdot\left(\sum_{a,b\geq 0}(-1)^{a+b}((\widehat D_y^{B+B_1})^b(\widehat D_x^A)^aC_l^{(a,b)})\right).\nonumber
\end{align}
Let us put 
\begin{equation*}A_1:=\sum_{l=1}^N\sum_{b\geq 1}\sum_{j=1}^{b}\sum_{a\geq 0}(-1)^{a+j-1}((D_y)^{b-j}F^l)\cdot((\widehat D_y^{B+B_1})^{j-1} (\widehat D_x^A)^aC_l^{(a,b)})\end{equation*}
and
\begin{equation*}Q_l:=\sum_{a,b\geq 0}(-1)^{a+b}((\widehat D_y^{B+B1})^b(\widehat D_x^A)^aC_l^{(a,b)}).\end{equation*}
The right-hand side of \eqref{ZCR2} can now be rewritten in the form
\begin{align}
\sum_{I,l}D_I(F^l)\cdot C_l^I
&=\widehat D_x^{A}(B_1)+\widehat D_y^{B+B_1}(A_1)+\sum_{l=1}^N F^l\cdot Q_l\nonumber\\
&=D_x(B_1)-[A,B_1]+D_y(A_1)-[B+B_1,A_1]+\sum_{l=1}^N F^l\cdot Q_l\nonumber\\
&=D_x(B_1)+D_y(A_1)+[A,B]-[A-A_1,B+B_1]+\sum_{l=1}^N F^l\cdot Q_l\nonumber
\end{align}
The equality \eqref{ZCR2} thus yields
\begin{equation*}
D_yA-D_xB+[A,B]=D_x(B_1)+D_y(A_1)+[A,B]-[A-A_1,B+B_1]+\sum_{l=1}^N F^l\cdot Q_l\ \mbox{ identically on } J^{\infty}(\pi).
\end{equation*}
and it is equivalent to
\begin{equation}\label{eqZCR}
D_y(A-A_1)-D_x(B+B_1)+[A-A_1,B+B_1]=\sum_{l=1}^N F^l\cdot Q_l \ \mbox{ identically on } J^{\infty}(\pi).
\end{equation} 
The pair \( (\widetilde{A}, \widetilde{B}) \), where \( \widetilde{A} := A - A_1 \) and \( \widetilde{B} := B + B_1 \), can be immediately recognized as a representative of a \( \mathfrak{g} \)-valued ZCR for \( \mathcal{E} \), since it  satisfies 
\begin{equation*}D_y\widetilde{A}-D_x\widetilde{B}+[\widetilde A,\widetilde B]=\sum_{l=1}^N F^l\cdot Q_l\end{equation*} identically on $J^{\infty}(\pi)$.
Moreover, the pair $(\widetilde{A}, \widetilde{B}) \in \mathcal{F}^{\mathfrak{g}}(\pi) \times \mathcal{F}^{\mathfrak{g}}(\pi)$ is equivalent to the pair $(A, B) \in \mathcal{F}^{\mathfrak{g}}(\pi) \times \mathcal{F}^{\mathfrak{g}}(\pi)$, since
$-A_1$ and  $ B_1$ are,
according to Lemma \ref{decomp_lm2}, elements of the ideal $\mathcal{I}^{\mathfrak{g}}(\mathcal{E})$. Thus, we have $(\restr{A}{\mathcal{E}},\restr{B}{\mathcal{E}})=(\restr{\widetilde A}{\mathcal{E}},\restr{\widetilde B}{\mathcal{E}})$, which means that $(\widetilde A,\widetilde B)$ is the desired extension of $\restr{(A,B)}{\mathcal{E}}$.

Finally, $\restr{Q}{\mathcal{E}} = (\restr{Q_1}{\mathcal{E}}, \dots, \restr{Q_N}{\mathcal{E}})$ is the characteristic element of $(\restr{\widetilde A}{\mathcal{E}},\restr{\widetilde B}{\mathcal{E}})$ (and hence of $(\restr{A}{\mathcal{E}}, \restr{B}{\mathcal{E}})$) by the defining condition~\eqref{char-el}. This completes the proof.

\end{proof}

\begin{remark}
Let us verify that $\restr{Q}{\mathcal{E}} = (\restr{Q_1}{\mathcal{E}}, \dots, \restr{Q_N}{\mathcal{E}})$ coincides with the characteristic element of the  $\mathfrak{g}$-valued ZCR $\restr{(A,B)}{\mathcal{E}}$,  when computed directly from its decomposition~\eqref{ZCR2}. According to formula~\eqref{char-el}, the characteristic element of $\restr{(A,B)}{\mathcal{E}}$ is given by $\restr{K}{\mathcal{E}} = (\restr{K_1}{\mathcal{E}}, \dots, \restr{K_N}{\mathcal{E}})$, where
$
K_l = \sum_I (-1)^{|I|} \widehat{D}_I C_l^I.
$
But, we have
\begin{align*}\restr{Q_l}{\mathcal{E}}&=\sum_{a,b\geq 0}(-1)^{a+b}\restr{((\widehat D_y^{B+B1})^b(\widehat D_x^A)^aC_l^{(a,b)})}{\mathcal{E}}=\sum_{a,b\geq 0}(-1)^{a+b}(\widehat D_y^{B+B1})^b(\widehat D_x^A)^a(\restr{C_l^{(a,b)}}{\mathcal{E}})\\
&=\sum_{a,b\geq 0}(-1)^{a+b}(\widehat D_y^{B})^b(\widehat D_x^A)^a(\restr{C_l^{(a,b)}}{\mathcal{E}})=\sum_I(-1)^{|I|}\restr{\widehat{D}_I C_l^l}{\mathcal{E}},\\
\end{align*}
where the third equality follows from the fact that $\restr{B}{\mathcal{E}}=\restr{(B+B_1)}{\mathcal{E}}$ and from the properties of the operators $\widehat D_x^R$ and $\widehat D_y^S$ on $\mathcal{F}^{\mathfrak{g}}(\mathcal{E})$ discussed in Section \ref{sec:1}.\\ 
This observation exactly corresponds to the fact that the characteristic element is an object associated with the entire equivalence class $\restr{(A,B)}{\mathcal{E}}$. 
\end{remark}

Following  terminology for conservation laws we refer to the Maurer--Cartan condition on $\restr{(A,B)}{\mathcal{E}}$ written in the form
\begin{equation}\label{charZCR}D_yA-D_xB+[A,B]=\sum_{l=1}^NF^lQ_l \mbox{ on } J^{\infty}(\pi)\end{equation} as to the \textit{characteristic form of the Maurer--Cartan condition}, the two-tuple $(A,B)\in\mathcal{F}^{\mathfrak{g}}(\pi)\times\mathcal{F}^{\mathfrak{g}}(\pi)$ realizing the characteristic form \eqref{charZCR} will be called the \textit{characteristic representative of the ZCR}. The corresponding $N$-tuple $Q=(Q_1,\dots,Q_N)\in(\mathcal{F}^{\mathfrak{g}}(\pi))^{N}$ of $\mathfrak{g}$-valued functions is called \textit{the characteristic} of the $\mathfrak{g}$-valued ZCR $\restr{(A,B)}{\mathcal{E}}$ corresponding to the characteristic representative $(A,B)$.

\begin{remark}
Let us note that the procedure for finding a characteristic representative presented in the proof of Proposition~\ref{CharForm} is not the only possible one. Observe that the resulting form depends on the order in which we eliminate the mixed derivatives \( (D_x)^a (D_y)^b(F) \) -- this order determines the order in which the remaining terms are acted upon by the operators \( \widehat{D}_x^R \) and \( \widehat{D}_y^S \), which, unlike \( D_x \) and \( D_y \), do not commute on \( J^{\infty}(\pi) \).

Moreover, the characteristic form also depends on the initial decomposition chosen for the ZCR condition. Different characteristic representatives will, of course, yield different characteristics. However, all of them are equivalent, meaning they represent the same characteristic element.
 \end{remark}

In the following example, we apply the procedure outlined in the proof of Proposition~\ref{CharForm} to two distinct representatives of a single \( \mathfrak{sl}(2) \)-valued ZCR for the KdV equation. This yields two distinct characteristic representatives with two distinct characteristics. However, we will see that both of these characteristics are extensions of the same $\mathfrak{g}$-valued function on $\mathcal{E}$ - the characteristic element of the ZCR.  This example thus illustrates that characteristics associated with two distinct characteristic representatives of a given ZCR are, in fact, equivalent.

\begin{example}\label{ex1}
Let $\mathcal{E}$ be given by the KdV equation and all its differential consequences
$$\mathcal{E}:\ F=u_y-u_{xxx}+6uu_x=0$$
(to remain consistent with the preceding notation, we write the independent variable usually denoted by \( t \) as \( y \)
). \\
Consider the matrices
\begin{equation*}A=\begin{pmatrix}
6uu_{xx}+6u_x^2-u_{4x}+u_{xy}&1\\u&-6uu_{xx}-6u_x^2+u_{4x}-u_{xy}
\end{pmatrix}\end{equation*}
 \mbox{ and } 
\begin{equation*}B=\begin{pmatrix}
6uu_x+u_x-u_{xxx}+u_y&-2u\\-2u^2+u_{xx}&-6uu_x-u_x+u_{xxx}-u_y
\end{pmatrix}
\end{equation*}
It can be verified by direct computation that the pair of matrices \( (A, B) \)  satisfies
\begin{equation*}D_yA-D_xB+[A,B]=\begin{pmatrix*}-D_xF+D_x(D_yF)&-2F-4uD_xF\\(2u+1)F+(4u^2-2u_{xx})D_xF&D_xF-D_x(D_yF)\end{pmatrix*}\end{equation*}
identically on $J^{\infty}(\pi)$.
Since each entry of the matrix on the right-hand side lies in $\mathcal{I}(\mathcal{E})$, we conclude that the entire matrix belongs to $\mathcal{I}^{\mathfrak{sl}(2)}(\mathcal{E})$, and we have $\restr{(D_y A - D_x B + [A, B])}{\mathcal{E}} = 0$. Therefore, the pair $(A,B)$ defines an $\mathfrak{sl}(2)$-valued ZCR for $\mathcal{E}$.

Let us consider the decomposition 
\begin{equation*}D_yA-D_xB+[A,B]=F\begin{pmatrix}
0&-2\\2u+1&0
\end{pmatrix}
+D_xF\begin{pmatrix}
-1&-4u\\4u^2-2u_{xx}&1
\end{pmatrix}
+
D_x(D_yF)\begin{pmatrix}
1&0\\0&-1
\end{pmatrix}.\end{equation*}
For clarity, we adopt the same notation as used in the proof. We set
\begin{equation*}C^{(0,0)}:=\begin{pmatrix}
0&-2\\2u+1&0
\end{pmatrix}\ \ C^{(1,0)}:=\begin{pmatrix}
-1&-4u\\4u^2-2u_{xx}&1
\end{pmatrix}, \mbox{ and } C^{(1,1)}:=\begin{pmatrix}
1&0\\0&-1
\end{pmatrix}.\end{equation*}
Our aim is to employ identities \eqref{sa} involving arbitrary (but fixed) \( \mathfrak{g} \)-valued functions \( R \) and \( S \), so that the resulting expression on the right-hand side matches the desired structure:
\begin{equation*}
F \cdot (\text{something}) + \widehat{D}_x^R(\text{something}) + \widehat{D}_y^S(\text{something}).
\end{equation*}
Obviously, we have
\begin{equation*}D_xF\cdot C^{(1,0)}=\widehat D_x^R(F\cdot C^{(1,0)})-F\cdot\widehat D_x^R(C^{(1,0)}),\end{equation*} and, upon repeated application of the identities, we arrive at
\begin{equation*}D_x(D_y(F))\cdot C^{(1,1)}=\widehat D_x^{R}(D_yF\cdot C^{(1,1)})-\widehat D_y^S(F\cdot \widehat D_x^R(C^{(1,1)}))+F\cdot \widehat D_y^S(\widehat D_x^{R}(C^{(1,1)})).\end{equation*}
Thus, we have
\begin{align*}
\lefteqn{D_yA-D_xB+[A,B]}\\&=&F\cdot\left(C^{(0,0)}-\widehat D_x^R(C^{(1,0)})+\widehat D_y^S(\widehat D_x^{R}(C^{(1,1)}))\right)+\widehat D_x^R\left(F\cdot C^{(1,0)}+D_yF\cdot C^{(1,1)}\right)+\widehat D_y^S(-F\cdot\widehat D_x^{R}(C^{(1,1)})),
\end{align*}
where $R,S$ are still fixed arbitrary functions.
At this point, we apply the trick used in the proof of Proposition \eqref{CharForm} and we set 
\begin{equation*}R:=A \mbox{ and } S:=B+B_1,\ \mbox{ where }B_1:=F\cdot C^{(1,0)}+D_yF\cdot C^{(1,1)}.\end{equation*}
We thus have
\begin{align*}
\lefteqn{D_yA-D_xB+[A,B]}\\&=F\cdot\left(C^{(0,0)}-\widehat D_x^A(C^{(1,0)})+\widehat D_y^{B+B_1}(\widehat D_x^{A}(C^{(1,1)}))\right)+\widehat D_x^A\left(B_1\right)+\widehat D_y^{B+B1}(-F\cdot\widehat D_x^{A}(C^{(1,1)})).\end{align*}
Further, we put $A_1:=-F\cdot\widehat D_x^{A}(C^{(1,1)})$, $\widetilde Q:=C^{(0,0)}-\widehat D_x^A(C^{(1,0)})+\widehat D_y^{B+B_1}(\widehat D_x^{A}(C^{(1,1)}))$, which yields
\begin{align*}
D_yA-D_xB+[A,B]&=F\cdot \widetilde Q+\widehat D_x^A\left(B_1\right)+\widehat D_y^{B+B1}(A_1).\end{align*}
Finally, we apply the definition of the operators  $\widehat D_x^A$ and $\widehat D_y^{B+B1}$ and we obtain
\begin{align*}
D_yA-D_xB+[A,B]=F\widetilde Q+D_x B_1-[A,B_1]+D_yA_1-[B+B_1,A_1].\end{align*}
However, the equality last obtained is equivalent to
\begin{equation*}D_y(A-A_1)-D_x(B+B_1)+[A-A_1,B+B_1]=F\cdot\widetilde Q.\end{equation*}
The functions \( A_1 \) and \( B_1 \) can be explicitly computed, as they are defined in terms of known quantities; however, we omit their explicit forms here.  Instead, we present only the resulting formulas for the desired characteristic representative $(\widetilde A,\widetilde B)$ of the ZCR $(\restr{A}{\mathcal{E}},\restr{B}{\mathcal{E}})$:
\begin{equation*}\widetilde A=
\begin{pmatrix}
6uu_{xx}+6u_x^2-u_{4x}+u_{xy}&12uu_x-2u_{xxx}+2u_y+1\\
u-12u^2u_x+2uu_{xxx}-2uu_y&-6uu_{xx}-6u_x^2+u_{4x}-u_{xy}
\end{pmatrix}\end{equation*}
and \begin{equation*}
\widetilde B=\begin{pmatrix}
6uu_{xy} + 6u_xu_y + u_x - u_{xxxy} + u_{yy}&-2u - 4(6uu_x - u_{xxx} + u_y)u\\
-2u^2 + u_{xx} + (6uu_x - u_{xxx} + u_y)(4u^2 - 2u_{xx})&-6uu_{xy} - 6u_xu_y - u_x + u_{xxxy} - u_{yy}
\end{pmatrix}.
\end{equation*}
The corresponding characteristic is \begin{equation*}\widetilde Q=
\begin{pmatrix}

-4u_{xx}F&-8uD_xF-4D_yF\\
1-2F+(4u_xx-8u^2)D_xF-4uD_yF&4u_{xx}F
\end{pmatrix}.\end{equation*}
A direct computation confirms that \( (\widetilde A, \widetilde B) \) is indeed a characteristic representative, with the corresponding characteristic coinciding with \( \widetilde{Q} \).

Since the ZCR is defined as the set of all equivalent pairs of functions, the procedure from the proof of Proposition \ref{CharForm} could be applied to any other representative of the same ZCR -- namely, to any pair \( (\bar{A}, \bar{B}) \) obtained from \( (A, B) \) by adding an element of \( \mathcal{I}^{\mathfrak{g}}(\mathcal{E}) \).

Let us try to carry this out. Consider a pair of \( \mathfrak{g} \)-valued functions

\begin{equation*}\bar{A}=A+D_xF\cdot\begin{pmatrix*}[r]-1&0\\0&1\end{pmatrix*}=\begin{pmatrix}
0&1\\
u&0
\end{pmatrix}, \mbox{ and }
\bar B= B+F\cdot\begin{pmatrix*}[r]-1&0\\0&1\end{pmatrix*}=\begin{pmatrix}
 u_x&-2u\\
 -2u^2 + u_{xx}&-u_x
 \end{pmatrix}.\end{equation*}
 This pair is well-known to define an $\mathfrak{sl}(2)$-valued ZCR. A direct computation shows that 
\begin{equation*}D_y\bar A-D_x\bar B+[\bar A,\bar B]=F\cdot \begin{pmatrix}
 0&0\\
 1&0
 \end{pmatrix},\end{equation*}
 thus, 
 \( (\bar A, \bar B) \) itself is a characteristic representative of our ZCR, with the corresponding characteristic being the function
\begin{equation*}
 \bar Q=
 \begin{pmatrix}
 0&0\\
 1&0
 \end{pmatrix}.
\end{equation*}
Since 
\begin{equation*}\widetilde Q-\bar Q=\begin{pmatrix}

-4u_{xx}F&-8uD_xF-4D_yF\\
-2F+(4u_xx-8u^2)D_xF-4uD_yF&4u_{xx}F
\end{pmatrix}\in\mathcal{I}^{\mathfrak{sl}(2)}(\mathcal{E}),\end{equation*}
the two characteristics \( \widetilde{Q} \) and \( \bar{Q} \) are equivalent, i.e., $ \restr{\widetilde{Q}}{\mathcal{E}} = \restr{\bar{Q}}{\mathcal{E}}$ , which is exactly as expected.
\end{example}

Let us note that the result of the above transformation, which replaced the matrices \( A \) and \( B \) with the equivalent pair \( \bar{A} \) and \( \bar{B} \), coincides with the result that would be obtained by solving the equation \( F = 0 \) for the variable \( u_y \), and then substituting for \( u_y \) and its derivatives into \( A \) and \( B \). This procedure corresponds to the standard introduction of so-called internal coordinates on evolution equations. It may thus appear to offer a significantly faster method for constructing a characteristic representative of a given ZCR than the one described in the proof of Proposition~\ref{CharForm}.
However, internal coordinates for the KdV equation can be introduced in a different way: one may solve the equation for \( u_{xxx} \) and subsequentally substitute for $u_{xxx}$ and its derivatives into \( (A, B) \). Yet, the resulting pair of matrices in this case will no longer be a characteristic representative.

The reason lies in the fact that each such substitution, arising from a choice of internal coordinates, yields a specific representative of the ZCR \( (\restr{A}{\mathcal{E}}, \restr{B}{\mathcal{E}}) \). However, not every representative is a characteristic one: in the previous example, we presented three representatives of the same ZCR,  \( (\widetilde{A}, \widetilde{B}) \) and \( (\bar{A}, \bar{B}) \) were characteristic representatives, whereas \( (A, B) \) was not.  Since the space of representatives for a given ZCR is infinite, it is unlikely that a randomly chosen representative, that arises as the result of the choice of internal coordinates will be a characteristic representative.\\

The following corollary is an immediate consequence of the preceding proposition and extends the statement of Classical Result~\ref{prop3}, originally formulated in the context of conservation laws, to the broader setting of ZCRs.

\begin{corollary}\label{cor}
The $N$-tuple $\chi=(\chi_1,\dots,\chi_N)\in(\mathcal{F}^{\mathfrak{g}}(\mathcal{E}))^N$ is the characteristic element of the $\mathfrak{g}$-valued ZCR $(\restr{A}{\mathcal{E}},\restr{B}{\mathcal{E}})$ if and only if there exists an extension $(\widetilde A,\widetilde B)\in\mathcal{F}^{\mathfrak{g}}(\pi)$  of $(\restr{A}{\mathcal{E}},\restr{B}{\mathcal{E}})$ and an extension $Q=(Q_1,\dots,Q_N)\in\mathcal{F}^{\mathfrak{g}}(\pi)$ of $\chi$ to $J^{\infty}(\pi)$ such that
\begin{equation*}D_y(\widetilde{A})-D_x(\widetilde{B})+[\widetilde A,\widetilde B]=\sum_{l=1}^N F^lQ_l \end{equation*} identically on $J^{\infty}(\pi)$.
\end{corollary}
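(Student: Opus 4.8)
The plan is to derive both implications directly from Proposition~\ref{CharForm} together with the defining formula \eqref{char-el} for the characteristic element. The forward direction is the substantive one, and it is essentially a restatement of what Proposition~\ref{CharForm} already delivers. Concretely: suppose $\chi = (\chi_1,\dots,\chi_N)$ is the characteristic element of $\restr{(A,B)}{\mathcal{E}}$. Apply Proposition~\ref{CharForm} to this ZCR; this produces an extension $(\widetilde A,\widetilde B)$ of $\restr{(A,B)}{\mathcal{E}}$ and an $N$-tuple $Q = (Q_1,\dots,Q_N) \in (\mathcal{F}^{\mathfrak{g}}(\pi))^N$ satisfying \eqref{charform} on $J^\infty(\pi)$, and moreover $\restr{Q}{\mathcal{E}}$ is the characteristic element of $\restr{(A,B)}{\mathcal{E}}$. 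Since the characteristic element of a given ZCR is a well-defined object (it is the same no matter which decomposition \eqref{ZCR2} one starts from, as explained in the remark following Proposition~\ref{CharForm}), we have $\restr{Q}{\mathcal{E}} = \chi$, i.e. $Q$ is an extension of $\chi$ to $J^\infty(\pi)$. This is exactly the right-hand side of the asserted equivalence.

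For the converse direction, suppose there exist extensions $(\widetilde A,\widetilde B)$ of $\restr{(A,B)}{\mathcal{E}}$ and $Q$ of $\chi$ satisfying $D_y\widetilde A - D_x\widetilde B + [\widetilde A,\widetilde B] = \sum_l F^l Q_l$ on $J^\infty(\pi)$. This identity is precisely a decomposition of the form \eqref{ZCR2} for the representative $(\widetilde A,\widetilde B)$, with $C^{(0,0)}_l := Q_l$ and all other $C^I_l := 0$. Applying formula \eqref{char-el} to this decomposition, the characteristic element of $\restr{(\widetilde A,\widetilde B)}{\mathcal{E}}$ has components $K_l = \sum_I (-1)^{|I|}\widehat D_I C^I_l = \widehat D_{(0,0)} C^{(0,0)}_l = Q_l$, so the characteristic element is $\restr{Q}{\mathcal{E}} = \chi$. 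Since $(\widetilde A,\widetilde B)$ represents $\restr{(A,B)}{\mathcal{E}}$, this $\chi$ is the characteristic element of $\restr{(A,B)}{\mathcal{E}}$, as required.

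The only point requiring a word of care — and the nearest thing to an obstacle — is the appeal to well-definedness of the characteristic element: we need that $\restr{Q}{\mathcal{E}}$, computed in the forward direction from the decomposition produced by Proposition~\ref{CharForm}, agrees with the $\chi$ we started with. This is guaranteed because the characteristic element is intrinsically associated with the equivalence class $\restr{(A,B)}{\mathcal{E}}$ (it arises, up to isomorphism of complexes, as a representative of a cohomology class of the first linear gauge complex, per \cite{Ma92}), and the remark after Proposition~\ref{CharForm} verifies explicitly that the formula \eqref{char-el} gives the same answer for the particular decomposition built in that proof as for any other decomposition \eqref{ZCR2}. Granting that, both implications are immediate and the proof is complete.

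\begin{proof}
($\Rightarrow$) Assume $\chi$ is the characteristic element of $\restr{(A,B)}{\mathcal{E}}$. By Proposition~\ref{CharForm}, there exist an extension $(\widetilde A,\widetilde B)$ of $\restr{(A,B)}{\mathcal{E}}$ to $J^{\infty}(\pi)$ and an $N$-tuple $Q=(Q_1,\dots,Q_N)\in(\mathcal{F}^{\mathfrak{g}}(\pi))^N$ such that \eqref{charform} holds on $J^{\infty}(\pi)$, and such that $\restr{Q}{\mathcal{E}}$ is the characteristic element of $\restr{(\widetilde A,\widetilde B)}{\mathcal{E}}=\restr{(A,B)}{\mathcal{E}}$. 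Since the characteristic element of a ZCR is uniquely determined by the equivalence class $\restr{(A,B)}{\mathcal{E}}$ (cf.\ the remark following Proposition~\ref{CharForm}), we conclude $\restr{Q}{\mathcal{E}}=\chi$; that is, $Q$ is an extension of $\chi$ to $J^{\infty}(\pi)$, and \eqref{charform} is the required identity.

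($\Leftarrow$) Conversely, suppose there exist an extension $(\widetilde A,\widetilde B)$ of $\restr{(A,B)}{\mathcal{E}}$ and an extension $Q=(Q_1,\dots,Q_N)$ of $\chi$ such that
$$D_y(\widetilde A)-D_x(\widetilde B)+[\widetilde A,\widetilde B]=\sum_{l=1}^N F^lQ_l$$
on $J^{\infty}(\pi)$. This is a decomposition of the form \eqref{ZCR2} for the representative $(\widetilde A,\widetilde B)$ of $\restr{(A,B)}{\mathcal{E}}$, namely with $C_l^{(0,0)}:=Q_l$ and $C_l^I:=0$ for all multiindices $I\neq(0,0)$. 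By formula~\eqref{char-el}, the components of the characteristic element of $\restr{(\widetilde A,\widetilde B)}{\mathcal{E}}$ are therefore
$$K_l=\sum_I(-1)^{|I|}\widehat{D}_I C_l^I=\widehat{D}_{(0,0)}C_l^{(0,0)}=Q_l,$$
so the characteristic element equals $\restr{Q}{\mathcal{E}}=\chi$. Since $(\widetilde A,\widetilde B)$ is a representative of $\restr{(A,B)}{\mathcal{E}}$, the $N$-tuple $\chi$ is the characteristic element of $\restr{(A,B)}{\mathcal{E}}$.
\end{proof}
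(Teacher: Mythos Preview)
Your proof is correct and follows exactly the approach the paper intends: the paper states that the corollary is ``an immediate consequence of the preceding proposition'' (Proposition~\ref{CharForm}) and offers no further argument, while you have simply spelled out that immediate consequence---the forward direction reads off Proposition~\ref{CharForm} together with the well-definedness of the characteristic element, and the converse recognizes the displayed identity as a decomposition~\eqref{ZCR2} with $C_l^{(0,0)}=Q_l$ and applies~\eqref{char-el}.
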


The next proposition presents a natural generalization of Classical Result~\ref{prop4}, transferring its content from the framework of conservation laws to that of $\mathfrak{g}$-valued ZCRs, assuming $\mathfrak{g}$ is abelian.

\begin{proposition}\label{ZCRprop4}
Let $\mathfrak{g}\subset\mathfrak{gl}(n)$ be an abelian matrix Lie algebra. The $N$-tuple of functions \( \psi = (\psi_1, \dots, \psi_N) \in(\mathcal{F}^{\mathfrak{g}}(\mathcal{E}) )^N\) is the characteristic element of a  $\mathfrak{g}$-valued ZCR for $\mathcal{E}$ if and only if  there exists its extension   
 \( (Q_1, \dots, Q_N) \in (\mathcal{F}^{\mathfrak{g}}(\pi))^N \) that satisfies
 \begin{equation}\label{ZCRprop4char_suf}
 0=\sum_{l=1}^N\sum_I(-1)^{|I|}D_I\left(\frac{\partial F^l}{\partial u_I^k}Q_l\right)+\sum_{l=1}^N\sum_I(-1)^{|I|}D_I\left(\frac{\partial Q^l}{\partial u_I^k}F_l\right)
 \end{equation}
  identically on $J^{\infty}(\pi)$ for all $k=1,\dots,m$.
\end{proposition}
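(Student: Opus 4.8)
The plan is to mimic the classical derivation of Classical Result~\ref{prop4} from Classical Result~\ref{prop1}, exploiting the fact that in the abelian case the Lie bracket term in the Maurer--Cartan condition vanishes and the operators $\widehat D_x^R$, $\widehat D_y^S$ reduce to ordinary total derivatives $D_x$, $D_y$. The key structural reduction is that when $\mathfrak g$ is abelian, the characteristic form \eqref{charZCR} becomes simply $D_y A - D_x B = \sum_l F^l Q_l$ on $J^\infty(\pi)$, i.e. $\mathrm{Div}^{\mathfrak g}(-B, A) = \sum_l F^l Q_l$, which is the \emph{exact} $\mathfrak g$-valued analogue of the characteristic form \eqref{CL_char_form} of a conservation law, now for $\mathfrak g$-valued conserved currents. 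So the abelian ZCR problem is literally the conservation-law problem with scalar functions replaced by $\mathfrak g$-valued ones.

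First I would prove the forward implication. If $\psi$ is the characteristic element of some $\mathfrak g$-valued ZCR $\restr{(A,B)}{\mathcal E}$, then by Proposition~\ref{CharForm} (Corollary~\ref{cor}) there is an extension $(\widetilde A, \widetilde B)$ and an extension $Q$ of $\psi$ with $D_y\widetilde A - D_x\widetilde B = \sum_l F^l Q_l$ on $J^\infty(\pi)$ (bracket drops out). Applying the $\mathfrak g$-valued Euler operator $\mathbf E^{\mathfrak g}$ to both sides: the left side is in $\operatorname{Im}\mathrm{Div}^{\mathfrak g}$, hence in $\ker\mathbf E^{\mathfrak g}$ by Lemma~\ref{eulerdivLie}, so $\mathbf E_k^{\mathfrak g}\bigl(\sum_l F^l Q_l\bigr) = 0$ for every $k$. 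Computing $\mathbf E_k^{\mathfrak g}$ of the product $F^l Q_l$ entrywise via the scalar Leibniz rule for the Euler operator gives exactly the right-hand side of \eqref{ZCRprop4char_suf} (the two sums being the two halves of the product rule, one differentiating $F^l$, the other $Q_l$). This yields \eqref{ZCRprop4char_suf} identically on $J^\infty(\pi)$.

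For the converse, suppose $\psi$ admits an extension $Q \in (\mathcal F^{\mathfrak g}(\pi))^N$ satisfying \eqref{ZCRprop4char_suf}. Reading that identity backwards, it says precisely $\mathbf E_k^{\mathfrak g}\bigl(\sum_l F^l Q_l\bigr) = 0$ for all $k$, i.e. the $\mathfrak g$-valued function $L := \sum_l F^l Q_l$ lies in $\ker \mathbf E^{\mathfrak g}$. By Lemma~\ref{eulerdivLie}, $L \in \operatorname{Im}\mathrm{Div}^{\mathfrak g}$, so there exist $\widetilde M, \widetilde N \in \mathcal F^{\mathfrak g}(\pi)$ with $D_x\widetilde M + D_y\widetilde N = \sum_l F^l Q_l$. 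Setting $\widetilde A := \widetilde N$ and $\widetilde B := -\widetilde M$ gives $D_y\widetilde A - D_x\widetilde B = \sum_l F^l Q_l$, and since the bracket $[\widetilde A,\widetilde B]$ vanishes ($\mathfrak g$ abelian), this is exactly the characteristic form \eqref{charZCR}; moreover $\sum_l F^l Q_l \in \mathcal I^{\mathfrak g}(\mathcal E)$, so $\restr{(D_y\widetilde A - D_x\widetilde B + [\widetilde A,\widetilde B])}{\mathcal E} = 0$ and $\restr{(\widetilde A,\widetilde B)}{\mathcal E}$ is a genuine $\mathfrak g$-valued ZCR. By the moreover-part of Proposition~\ref{CharForm}, $\restr Q{\mathcal E} = \psi$ is its characteristic element. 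The main obstacle — the only place any care is needed — is verifying that the scalar product rule for the Euler operator reproduces \eqref{ZCRprop4char_suf} entrywise and that Lemma~\ref{eulerdivLie} is available in the $\mathfrak g$-valued setting; both are already in hand, so the argument is essentially a transcription of the classical one with Lemma~\ref{eulerdivLie} replacing \eqref{eulerdiv}.
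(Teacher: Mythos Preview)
Your proposal is correct and follows essentially the same approach as the paper's proof: both directions use the abelian hypothesis to drop the bracket, invoke Proposition~\ref{CharForm}/Corollary~\ref{cor} to pass to a characteristic representative, recognize \eqref{ZCRprop4char_suf} as $\mathbf{E}_k^{\mathfrak{g}}\bigl(\sum_l F^l Q_l\bigr)=0$, and appeal to Lemma~\ref{eulerdivLie} to pass between $\ker\mathbf{E}^{\mathfrak{g}}$ and $\operatorname{Im}\mathrm{Div}^{\mathfrak{g}}$. The only cosmetic difference is your choice of labels $(\widetilde A,\widetilde B)=(\widetilde N,-\widetilde M)$ versus the paper's $(B,-A)$.
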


\begin{proof}
Let $\psi = (\psi_1, \dots, \psi_N) \in (\mathcal{F}^{\mathfrak{g}}(\mathcal{E}))^N$ be the characteristic element of a $\mathfrak{g}$-valued ZCR $(\restr{A}{\mathcal{E}}, \restr{B}{\mathcal{E}})$, and let $(\widetilde{A}, \widetilde{B})$ be its characteristic representative. According to Corollary~\ref{cor}, we know that $\psi$ is the restriction of the characteristic 
$Q = (Q_1, \dots, Q_N) \in (\mathcal{F}^{\mathfrak{g}})^N$ corresponding to $(\widetilde{A}, \widetilde{B})$, and these functions satisfy condition
\begin{equation*}
D_y \widetilde{A} - D_x \widetilde{B} + [\widetilde{A}, \widetilde{B}] = \sum_{l} F^l Q_l.
\end{equation*}
However, since $\mathfrak{g}$ is abelian, we have $[\widetilde{A}, \widetilde{B}] = 0$, so the identity simplifies to
\begin{equation*}
\sum_{l} F^l Q_l = D_y \widetilde{A} - D_x \widetilde{B} = \mathrm{Div}^{\mathfrak{g}}(-\widetilde{B}, \widetilde{A}).
\end{equation*}
Applying the $k$-th component $\mathrm{\mathbf{E}}_k^{\mathfrak{g}}$ of the $\mathfrak{g}$-analogue of the Euler operator to the above identity and using the result of Lemma~\ref{eulerdivLie}, we obtain the desired equality.

On the other hand, let \( (Q_1, \dots, Q_N) \in (\mathcal{F}^{\mathfrak{g}}(\pi))^N \) satisfy~\eqref{ZCRprop4char_suf}. Since this condition can be rewritten as
\begin{equation*}
\mathbf{E}_k^{\mathfrak{g}}\left(\sum_{l=1}^N F^l Q_l\right) = 0,
\end{equation*} for all $k=1,\dots,m$,
Lemma~\ref{eulerdivLie} implies that
\begin{equation*}
\sum_{l=1}^N F^l Q_l = \mathrm{Div}^{\mathfrak{g}}(A, B) = D_x A + D_y B = D_x A + D_y B - [A, B],
\end{equation*}
for some $A,B \in \mathcal{F}^{\mathfrak{g}}(\pi)$, where the last equality holds since $\mathfrak{g}$ is abelian. Therefore, according to the definition, the restriction $\restr{(Q_1, \dots, Q_N)}{\mathcal{E}}$ is the characteristic element of the $\mathfrak{g}$-valued ZCR $\restr{(B,-A)}{\mathcal{E}}$, which completes the proof.

\end{proof}
The situation turns out to be entirely different when $\mathfrak{g}$ is nonabelian.
The following proposition provides a necessary condition that must be satisfied by every characteristic representative of a zero-curvature representation. This condition is genuinely new -- it does not arise as an extension of any known nontrivial result concerning conserved currents.

\begin{proposition}\label{newNec}
Let $(\tilde A,\tilde B)$ be the characteristic representative of the  $\mathfrak{g}$-valued ZCR $(\restr{A}{\mathcal{E}},\restr{B}{\mathcal{E}})$. Then the equality 
\begin{equation}\label{ZCRcondition}\restr{\sum_{I}(-1)^{|I|}{\widehat D}_I\frac{\partial}{\partial u_I^k}\left(D_y\widetilde A-D_x\widetilde B+[\widetilde A,\widetilde B]\right)}{\mathcal{E}}=0 \end{equation}
holds for all $ k=1,\dots,m.$
\end{proposition}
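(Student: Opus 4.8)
The plan is to substitute the characteristic form of the Maurer--Cartan condition into \eqref{ZCRcondition} and reduce the resulting expression to the already-established necessary condition \eqref{Nec} for the characteristic element. Since $(\tilde A,\tilde B)$ is a characteristic representative, we have, identically on $J^{\infty}(\pi)$,
$$D_y\tilde A-D_x\tilde B+[\tilde A,\tilde B]=\sum_{l=1}^N F^l Q_l,$$
where, by Proposition~\ref{CharForm} (equivalently Corollary~\ref{cor}), the restriction $\restr{Q}{\mathcal{E}}=(\restr{Q_1}{\mathcal{E}},\dots,\restr{Q_N}{\mathcal{E}})$ is the characteristic element $\chi=(\chi_1,\dots,\chi_N)$ of $\restr{(A,B)}{\mathcal{E}}$. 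First I would apply the partial derivative $\partial/\partial u^k_I$ and the Leibniz rule, obtaining $\partial(F^lQ_l)/\partial u^k_I=(\partial F^l/\partial u^k_I)\,Q_l+F^l\,(\partial Q_l/\partial u^k_I)$, and split the double sum in \eqref{ZCRcondition} into the corresponding two parts.

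For the second part, $\sum_{l,I}(-1)^{|I|}\widehat D_I\bigl(F^l\cdot(\partial Q_l/\partial u^k_I)\bigr)$, the point is that each summand already lies in $\mathcal{I}^{\mathfrak{g}}(\mathcal{E})$ and therefore restricts to $0$. Indeed $F^l\in\mathcal{I}(\mathcal{E})$ and $\partial Q_l/\partial u^k_I\in\mathcal{F}^{\mathfrak{g}}(\pi)$, so $F^l\cdot(\partial Q_l/\partial u^k_I)\in\mathcal{I}^{\mathfrak{g}}(\mathcal{E})$; moreover each of the operators $\widehat D_x=\widehat D_x^{\tilde A}$ and $\widehat D_y=\widehat D_y^{\tilde B}$ maps $\mathcal{I}^{\mathfrak{g}}(\mathcal{E})$ into itself, since $D_x,D_y$ preserve this ideal and $\mathrm{ad}_{\tilde A},\mathrm{ad}_{\tilde B}$ preserve it because $\mathcal{I}^{\mathfrak{g}}(\mathcal{E})$ is a Lie ideal in $\mathcal{F}^{\mathfrak{g}}(\pi)$. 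Hence $\widehat D_I\bigl(F^l\cdot(\partial Q_l/\partial u^k_I)\bigr)\in\mathcal{I}^{\mathfrak{g}}(\mathcal{E})$, so this part of \eqref{ZCRcondition} vanishes after restriction.

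For the first part, $\sum_{l,I}(-1)^{|I|}\widehat D_I\bigl((\partial F^l/\partial u^k_I)\,Q_l\bigr)$, I would use that $\restr{\widehat D_x^{\tilde A}(M)}{\mathcal{E}}=\widehat D_x^{\restr{A}{\mathcal{E}}}(\restr{M}{\mathcal{E}})$ (and the analogue for $\widehat D_y$), together with the commutativity of $\widehat D_x^{\restr{A}{\mathcal{E}}}$ and $\widehat D_y^{\restr{B}{\mathcal{E}}}$ on $\mathcal{F}^{\mathfrak{g}}(\mathcal{E})$, so that $\restr{\widehat D_I(\,\cdot\,)}{\mathcal{E}}=\widehat D_I(\restr{\,\cdot\,}{\mathcal{E}})$ with $\widehat D_I$ the well-defined operator on $\mathcal{F}^{\mathfrak{g}}(\mathcal{E})$ appearing in \eqref{Nec}. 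Passing to the restriction and using $\restr{Q_l}{\mathcal{E}}=\chi_l$, this part becomes exactly $\sum_{l,I}(-1)^{|I|}\restr{\widehat D_I\bigl(\tfrac{\partial F^l}{\partial u^k_I}\chi_l\bigr)}{\mathcal{E}}$, which is the left-hand side of \eqref{Nec} and hence equals $0$ for every $k=1,\dots,m$. Combining the two parts yields the claimed identity.

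I expect the only genuinely delicate point to be the bookkeeping around the $\widehat D_I$ operators: one must verify that the nonabelian operators $\widehat D_x^{\tilde A},\widehat D_y^{\tilde B}$ — which do not commute on $J^{\infty}(\pi)$, so that the ordering implicit in $\widehat D_I$ is a priori relevant — descend after restriction to the commuting operators on $\mathcal{F}^{\mathfrak{g}}(\mathcal{E})$ used in \eqref{Nec}, and that replacing $(A,B)$ by the equivalent characteristic representative $(\tilde A,\tilde B)$ inside these operators leaves their restriction unchanged. Everything else is a routine application of the Leibniz rule and of the ideal property of $\mathcal{I}^{\mathfrak{g}}(\mathcal{E})$.
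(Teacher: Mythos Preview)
Your proposal is correct and follows essentially the same argument as the paper's proof, merely run in the reverse direction: the paper starts from the necessary condition \eqref{Nec}, adds the vanishing term $\restr{\sum_{I,l}(-1)^{|I|}\widehat D_I\bigl(F^l\,\partial Q_l/\partial u_I^k\bigr)}{\mathcal{E}}$, and combines via the Leibniz rule to obtain \eqref{ZCRcondition}, whereas you start from \eqref{ZCRcondition}, split via Leibniz, and reduce to \eqref{Nec}. The key ingredients---the characteristic form, the ideal property of $\mathcal{I}^{\mathfrak{g}}(\mathcal{E})$ under $\widehat D_x,\widehat D_y$, and the invocation of \eqref{Nec}---are identical, and your caution about the ordering of $\widehat D_I$ and the passage from $(\tilde A,\tilde B)$ to $\restr{(A,B)}{\mathcal{E}}$ is exactly the right bookkeeping check.
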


\begin{proof}
Let $Q=(Q_1,\dots,Q_N)$ be the characteristic corresponding to the characteristic representative $(\tilde A,\tilde B)$ .  We have
\begin{align*}
0&=\restr{\sum_{I,l}(-1)^{|I|}{\widehat D}_I\left(\frac{\partial F^l}{\partial u_I^k}Q_l\right)}{\mathcal{E}}=\restr{\sum_{I,l}(-1)^{|I|}{\widehat D}_I\left(\frac{\partial F^l}{\partial u_I^k}Q_l\right)}{\mathcal{E}}+\restr{\sum_{I,l}(-1)^{|I|}{\widehat D}_I\left(F^l\frac{\partial Q_l}{\partial u_I^k}\right)}{\mathcal{E}}\\
&=\restr{\sum_{I}(-1)^{|I|}{\widehat D}_I\left(\frac{\partial}{\partial u_I^k}\left(\sum_{l}F^lQ_l\right)\right)}{\mathcal{E}}=\restr{\sum_{I}(-1)^{|I|}{\widehat D}_I\left(\frac{\partial}{\partial u_I^k}\left(D_y\widetilde A-D_x\widetilde B+[\widetilde A,\widetilde B]\right)\right)}{\mathcal{E}}.
\end{align*}
The first equality follows from   \eqref{Nec}, since $\restr{Q}{\mathcal{E}}$ is the characteristic element of $(\restr{A}{\mathcal{E}},\restr{B}{\mathcal{E}})$. The second equality holds since the term $\sum_lF^l\frac{\partial Q_l}{\partial u_I^k}$ lies in the ideal $\mathcal{I}^{\mathfrak{g}}(\mathcal{E})$, that is, $\restr{\sum_lF^l\frac{\partial Q_l}{\partial u_I^k}}{\mathcal{E}}=0$, and both $\widehat D_x^{\widetilde A}$ and $\widehat D_y^{\widetilde B}$ are linear on $\mathcal{F}^{\mathfrak{g}}(\mathcal{E})$, thus they map zero to zero. The other two equalities follow from the linearity of the operators $\widehat D_x^{\widetilde A}$, $\widehat D_y^{\widetilde B}$ and $\frac{\partial}{\partial u_I^k}$ on $\mathcal{F}^{\mathfrak{g}}(\pi)$.
\end{proof}

Proposition~\ref{newNec}, together with Proposition~\ref{CharForm}, yields the result formulated in the following corollary.
\begin{corollary}\label{cor2}
Let $\restr{(A,B)}{\mathcal{E}}$ be a $\mathfrak{g}$-valued ZCR for $\mathcal{E}$. There exists its representative $(\widetilde A,\widetilde B)$ such that 
\begin{equation*}\restr{\sum_{I}(-1)^{|I|}{\widehat D}_I\frac{\partial}{\partial u_I^k}\left(D_y\widetilde A-D_x\widetilde B+[\widetilde A,\widetilde B]\right)}{\mathcal{E}}=0\end{equation*}
for all $k=1,\dots,m$.
\end{corollary}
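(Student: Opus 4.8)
The plan is to obtain Corollary~\ref{cor2} by simply concatenating Proposition~\ref{CharForm} with Proposition~\ref{newNec}; no computation beyond what those two results already supply should be needed.

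First I would invoke Proposition~\ref{CharForm} for the given $\mathfrak{g}$-valued ZCR $\restr{(A,B)}{\mathcal{E}}$. That proposition produces an extension $(\widetilde A,\widetilde B)$ to $J^{\infty}(\pi)$ of the ZCR, together with an $N$-tuple $Q=(Q_1,\dots,Q_N)\in(\mathcal{F}^{\mathfrak g}(\pi))^N$, such that the Maurer--Cartan expression is cast into the characteristic form
$$D_y\widetilde A-D_x\widetilde B+[\widetilde A,\widetilde B]=\sum_{l=1}^N F^lQ_l$$
on $J^{\infty}(\pi)$; in the terminology fixed right after that proposition, $(\widetilde A,\widetilde B)$ is a \emph{characteristic representative} of $\restr{(A,B)}{\mathcal{E}}$ and $Q$ is its associated characteristic. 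In particular $\restr{(\widetilde A,\widetilde B)}{\mathcal{E}}=\restr{(A,B)}{\mathcal{E}}$, so $(\widetilde A,\widetilde B)$ is a representative of the ZCR as required.

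Next I would feed this characteristic representative into Proposition~\ref{newNec}. The hypothesis of that proposition is precisely that $(\widetilde A,\widetilde B)$ be a characteristic representative of the ZCR, and its conclusion~\eqref{ZCRcondition} is exactly the asserted identity
$$\restr{\sum_{I}(-1)^{|I|}\widehat D_I\frac{\partial}{\partial u_I^k}\bigl(D_y\widetilde A-D_x\widetilde B+[\widetilde A,\widetilde B]\bigr)}{\mathcal{E}}=0,\qquad k=1,\dots,m,$$
so the corollary follows immediately.

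I do not anticipate a genuine obstacle, since the substance has already been done: the existence of the characteristic representative rests on the iterated integration-by-parts identities~\eqref{sa} exploited in the proof of Proposition~\ref{CharForm}, while the vanishing in~\eqref{ZCRcondition} rests on the cohomological necessary condition~\eqref{Nec} for the characteristic element, used in the proof of Proposition~\ref{newNec}. The only point demanding care is notational: the operators $\widehat D_I$ in the statement must be read as built from $\widehat D_x^{\widetilde A}$ and $\widehat D_y^{\widetilde B}$ attached to the \emph{chosen} characteristic representative, in accordance with the conventions of Section~\ref{sec:ZCR}; with that reading the two cited results dovetail exactly, and the corollary is proved.
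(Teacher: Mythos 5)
Your proposal is correct and follows exactly the paper's argument: the corollary is stated there as an immediate consequence of combining Proposition~\ref{CharForm} (existence of a characteristic representative) with Proposition~\ref{newNec} (the necessary condition satisfied by any characteristic representative). Your remark on reading $\widehat D_I$ as built from the operators attached to the chosen characteristic representative is the right notational care, and nothing further is needed.
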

\begin{remark}\label{triv}
Let us note that in the case where $\mathfrak{g}$ is abelian, the condition~\eqref{ZCRcondition} just proved is satisfied trivially by arbitrary $\mathfrak{g}$-valued functions $A$ and $B\in\mathcal{F}^{\mathfrak{g}}(\pi)$ (thus, the condition  is trivially satisfied also in the case where $(-B, A)$ is a conserved current).\\
Indeed, in the abelian case, the operator 
$
\sum_{I}(-1)^{|I|}{\widehat D}_I \circ \frac{\partial}{\partial u^k_I} : \mathcal{F}^{\mathfrak{g}}(\pi) \to \mathcal{F}^{\mathfrak{g}}(\pi)
$
coincides with the $k$-th component $\mathrm{E}_{k}^{\mathfrak{g}}$ of the Euler operator $\mathrm{\mathbf{E}}^{\mathfrak{g}}$, and for arbitrary $\mathfrak{g}$-valued functions $A,B\in\mathcal{F}^{\mathfrak{g}}(\pi)$ we have
$
D_y A - D_x B + [A, B] = \mathrm{Div}^{\mathfrak{g}}(-B, A),
$
hence,
\begin{equation*}
\sum_{I}(-1)^{|I|}{\widehat D}_I \left( \frac{\partial}{\partial u_I^k} \left( D_y A - D_x B + [A, B] \right) \right) 
= \mathrm{E}^{\mathfrak{g}}_k(\mathrm{Div}^{\mathfrak{g}}(-B, A)) = 0 \quad \text{identically on } J^{\infty}(\pi).
\end{equation*}
Thus, we have also
\begin{equation*}
\restr{\sum_{I}(-1)^{|I|}{\widehat D}_I \left( \frac{\partial}{\partial u_I^k} \left( D_y A - D_x B + [A, B] \right) \right)}{\mathcal{E}} = 0.
\end{equation*}
 for arbitrary $A,B\in\mathcal{F}^{\mathfrak{g}}(\pi)$.
\end{remark}

Note, that similarly to the abelian case, the left-hand side of condition~\eqref{ZCRcondition} can be interpreted as the result of applying a gauge-theoretic analogue of the Euler operator to the $\mathfrak{g}$-valued function $D_y A - D_x B + [A, B]$ on $J^{\infty}(\pi)$, however, followed by restriction to $\mathcal{E}$.

More precisely, given a representative $(A,B)$ of a ZCR $\restr{(A,B)}{\mathcal{E}}$ for $\mathcal{E}$, let us introduce the following notation:
\begin{equation*}\widehat{\mathbf{E}}^{A,B}= \left( \widehat{\mathrm{E}}_1^{A,B}, \dots, \widehat{\mathrm{E}}_N^{A,B} \right):\mathcal{F}^{\mathfrak{g}}(\pi)\to(\mathcal{F}^{\mathfrak{g}}(\pi))^m,\end{equation*}
\begin{equation*}
\widehat{\mathrm{E}}_k^{A,B} := \sum_I (-1)^{|I|} \widehat D_I \circ \frac{\partial}{\partial u_I^k},
\end{equation*}
where the sum is taken over all multi-indices $I = (a, b)$ with $a, b \in \mathbb{N} \cup \{0\}$, and with corresponding operators $\widehat D_I = (\widehat D_x^A)^a \circ (\widehat D_y^B)^b$. The order of composition is specified (first $x$, then $y$) because, in general, $\widehat D_x^A$ and $\widehat D_y^B$ do not commute on $\mathcal{F}^{\mathfrak{g}}(\pi)$. Nevertheless, this choice is without loss of generality, since the result obtained after restriction to $\mathcal{E}$ remains unchanged, when the operators $\widehat D_x^A$ and $\widehat D_y^B$ are applied in different order ($\widehat D_x^A$ and $\widehat D_y^B$ commute on $\mathcal{F}^{\mathfrak{g}}(\mathcal{E})$ ).

Thus, ~\eqref{ZCRcondition} can be written as
\begin{equation*}
\restr{\left(\widehat{\mathrm{E}}_k^{A,B} (D_y A - D_x B + [A, B])\right)}{\mathcal{E}}=0 \mbox{ for all }k=1,\dots,m.
\end{equation*}

In the following example, we illustrate the validity of identity~\eqref{ZCRcondition} for characteristic representatives of a ZCR using concrete examples. It turns out that, unlike in the abelian case, the expression $\widehat{\mathrm{E}}_k^{A,B}(D_y A - D_x B + [A, B])$ associated with a characteristic representative $(A, B)$ of a ZCR does not necessarily vanish identically on the entire jet space $J^{\infty}(\pi)$. Moreover, we show that condition~\eqref{ZCRcondition} is, in general, not satisfied for non-characteristic representatives of ZCRs. This, \textit{inter alia},  shows that condition ~\eqref{ZCRcondition} is by no means satisfied trivially, nor it is a direct consequence of the Maurer--Cartan condition \eqref{ZCR}.

\begin{example}\label{new1}
Consider again the equation manifold
\begin{equation*}
\mathcal{E}:\quad F = u_y - u_{xxx} + 6u u_x = 0
\end{equation*}
given by the KdV equation along with its differential consequences, and consider the ZCR $\restr{(A,B)}{\mathcal{E}}$ for $\mathcal{E}$ from Example~\ref{ex1}. Let us successively examine all three representatives $(\bar A,\bar B)$, $(\widetilde A,\widetilde B)$, and $(A,B)$ of this ZCR introduced in Example~\ref{ex1}, and apply to each the relevant gauge-theoretic analogue of the Euler operator, as defined above.

In the case of $(\bar A,\bar B)$ we have
\begin{align*}
\widehat{\mathrm{E}}^{\bar A,\bar B}(D_y\bar A-D_x\bar B+[\bar A,\bar B])
&=\widehat{\mathrm{E}}^{\bar A,\bar B}\left(\begin{pmatrix}0&0\\6uu_x-u_{xxx}+u_y&0\end{pmatrix}\right)\\
&=\frac{\partial}{\partial u}\begin{pmatrix}0&0\\6uu_x-u_{xxx}+u_y&0\end{pmatrix}
-\widehat D_x\left(\frac{\partial}{\partial u_x}\begin{pmatrix}0&0\\6uu_x-u_{xxx}+u_y&0\end{pmatrix}\right)\\
& -\widehat D_y\left(\frac{\partial}{\partial u_y}\begin{pmatrix}0&0\\6uu_x-u_{xxx}+u_y&0\end{pmatrix}\right)
-\widehat D_x^3\left(\frac{\partial}{\partial u_{xxx}}\begin{pmatrix}0&0\\6uu_x-u_{xxx}+u_y&0\end{pmatrix}\right)\\
&=\begin{pmatrix}0&0\\6u_x&0\end{pmatrix}
-D_x\begin{pmatrix}0&0\\6u&0\end{pmatrix}
+\mathrm{ad}_{\bar A}\begin{pmatrix}0&0\\6u&0\end{pmatrix}
-D_y\begin{pmatrix}0&0\\1&0\end{pmatrix}
+\mathrm{ad}_{\bar B} \begin{pmatrix}0&0\\1&0\end{pmatrix}\\
& -\left(D_x-\mathrm{ad}_{\bar A}\right)^3\begin{pmatrix*}[r]0&0\\-1&0\end{pmatrix*}
\end{align*}
However, direct computations show that 
\begin{equation*}
\mathrm{ad}_{\bar A}\begin{pmatrix}0&0\\6u&0\end{pmatrix}=\begin{pmatrix}6u&0\\0&-6u\end{pmatrix},\ 
\mathrm{ad}_{\bar B} \begin{pmatrix}0&0\\1&0\end{pmatrix}=\begin{pmatrix}-2u&0\\-2u_x&2u\end{pmatrix}
\end{equation*}
and,
\begin{equation*}
\left(D_x-\mathrm{ad}_{\bar A}\right)^3\begin{pmatrix*}[r]0&0\\-1&0\end{pmatrix*}=\begin{pmatrix*}[r]4u&0\\-2u_x&-4u
\end{pmatrix*}.\end{equation*}
We thus obtain an $\mathfrak{sl}(2)$-valued function that is identically zero on $J^{\infty}(\pi)$:
\begin{equation*}\widehat{\mathrm{E}}^{\bar A,\bar B}(D_y\bar A-D_x\bar B+[\bar A,\bar B])=\begin{pmatrix*}[r]0&0\\0&0\end{pmatrix*}.\end{equation*}
The $\mathfrak{sl}(2)$-valued function $\widehat{\mathrm{E}}^{\widetilde A,\widetilde B}(D_y\widetilde A-D_x\widetilde B+[\widetilde A,\widetilde B])$ can be computed analogously, however, the computation is significantly more involved. For this reason, we used computer algebra software to carry it out. We do not present the full result here, as it is rather lengthy. What is important, though, is the fact that
\begin{equation*}
\widehat{\mathrm{E}}^{\widetilde A,\widetilde B}(D_y\widetilde A - D_x\widetilde B + [\widetilde A,\widetilde B]) = X,
\end{equation*}
where $X$ is a $\mathfrak{g}$-valued function on $J^{\infty}(\pi)$ that belongs to the ideal $\mathcal{I}^{\mathfrak{g}}(\mathcal{E})$, but is not identically zero on $J^{\infty}(\pi)$.

We thus observe that in both cases of characteristic representatives of the ZCR for the KdV equation, the application of the corresponding gauge-theoretic analogue of the Euler operator to the left-hand side of the Maurer-Cartan condition always yields a $\mathfrak{sl}(2)$-valued function belonging to the ideal $\mathcal{I}^{\mathfrak{sl}(2)}(\mathcal{E})$. Therefore, condition~\eqref{ZCRcondition} is indeed satisfied, precisely as stated in Proposition~\ref{newNec}. However, this $\mathfrak{g}$-valued function may or may not vanish identically on the entire space $J^{\infty}(\pi)$.

Finally, let us carry out an analogous computation for the non-characteristic representative $(A,B)$.
Upon the use of the computer algebra software we obtain 
\begin{equation*}
\widehat{\mathrm{E}}^{A,B}(D_y A-D_x B+[A,B])
=\begin{pmatrix*}[r]f_{1}&f_2\\f_3&-f_1\end{pmatrix*}+Y,
\end{equation*}
where
\begin{equation*}
f_1 = 288u^2u_x + 20uu_{xxx} +124u_xu_{xx} -8u_{5x},\ 
f_2 = 64u^3 + 152uu_{xx} + 160u_x^2 - 20u_{4x}\end{equation*}
\begin{equation*}f_3 = -64u^4 - 168u^2u_{xx} - 280uu_x^2 + 16uu_{4x} - 16u_xu_{xxx} - 20u_{xx}^2 + 2u_{6x},\end{equation*}
and $Y$ is a $\mathfrak{g}$-valued function on $J^{\infty}(\pi)$ whose explicit form is rather lengthy and thus omitted here, but it can be verified that it belongs to the ideal $\mathcal{I}^{\mathfrak{sl}(2)}(\mathcal{E})$.

On the other hand, the matrix
$
\begin{pmatrix*}[r]f_{1}&f_2\\f_3&-f_1\end{pmatrix*}
$
evidently does not belong to $\mathcal{I}^{\mathfrak{sl}(2)}(\mathcal{E})$. Thus, the condition~\eqref{ZCRcondition} is not satisfied for $(A,B)$. It is therefore clear that the statement of Proposition~\ref{newNec} does not hold for an arbitrary representative of a ZCR.
\end{example}

Up to this point, it might seem that characteristic representatives can be viewed as a kind of normal form for all  $\mathfrak{g}$-valued zero-curvature representations of a given equation $\mathcal{E}$ -- even though the characteristic form is not unique.
However, zero-curvature representations are typically considered only up to gauge equivalence.
For our characteristic form to be regarded as a proper normal form, it must be preserved under gauge transformations.
It is generally known that this is indeed the case.
For the sake of completeness, we formulate this result explicitly as a proposition below. Based on the discussion in Remark~\ref{gaugeCL1}, the following proposition can be regarded as  a natural extension of Classical Result~\ref{prop_CL_char} which concerns conservation laws to the broader setting of $\mathfrak{g}$-valued zero-curvature representations.
 \begin{proposition}\label{prop_gauge}
Let $(A,B)$ be the characteristic representative of a $\mathfrak{g}$-valued ZCR for $\mathcal{E}$ with associated characteristic $Q=(Q_1,\dots,Q_N)$ and let $H\in\mathcal{F}^{\mathcal{G}}(\pi)$ be a $\mathcal{G}$-valued function. Then the transformed pair $(A^H,B^H)$ is a characteristic representative of the gauge-transformed ZCR $\restr{(A^H,B^H)}{\mathcal{E}}$, and the characteristic associated to $(A^H,B^H)$ is given by 
\begin{equation*}Q^H=(HQ_1H^{-1},\dots,HQ_NH^{-1}).\end{equation*}
\end{proposition}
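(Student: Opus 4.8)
The plan is to reduce the statement to the key conjugation identity that already appears inside the proof of the lemma asserting that gauge transformations preserve ZCRs. That computation is carried out entirely in $\mathcal{F}^{\mathfrak{gl}(n)}(\pi)$, \emph{before} any passage to $\mathcal{E}$, and it establishes the on-the-nose identity
$$D_y(A^H)-D_x(B^H)+[A^H,B^H]=H\,\bigl(D_yA-D_xB+[A,B]\bigr)\,H^{-1}$$
on all of $J^{\infty}(\pi)$, not merely modulo $\mathcal{I}^{\mathfrak{g}}(\mathcal{E})$. The cited lemma states only the restricted consequence, but the intermediate identity in its proof is exactly what is needed here; making this explicit is the one point that requires care, and it is essentially the only nontrivial input. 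Everything else is bookkeeping.

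Next I would substitute the hypothesis that $(A,B)$ is a characteristic representative with characteristic $Q=(Q_1,\dots,Q_N)$, i.e.\ that the characteristic form \eqref{charZCR} holds, $D_yA-D_xB+[A,B]=\sum_{l=1}^N F^lQ_l$ on $J^{\infty}(\pi)$, into the right-hand side above, obtaining
$$D_y(A^H)-D_x(B^H)+[A^H,B^H]=H\Bigl(\sum_{l=1}^N F^lQ_l\Bigr)H^{-1}=\sum_{l=1}^N F^l\,(HQ_lH^{-1}),$$
where the last equality uses that each $F^l\in\mathcal{F}(\pi)$ is a \emph{scalar} differential function and therefore commutes with matrix multiplication, so $H(F^lQ_l)H^{-1}=F^l\,HQ_lH^{-1}$. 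This is precisely the characteristic form of the Maurer--Cartan condition for the pair $(A^H,B^H)$, with characteristic $Q^H=(HQ_1H^{-1},\dots,HQ_NH^{-1})$.

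It then remains only to verify the routine well-definedness points. That $A^H,B^H$ and each $HQ_lH^{-1}$ genuinely lie in $\mathcal{F}^{\mathfrak{g}}(\pi)$ is immediate from the facts recalled in Section~\ref{sec:ZCR}, namely that $\mathcal{F}^{\mathfrak{g}}(\pi)$ is closed under conjugation by elements of $\mathcal{F}^{\mathcal{G}}(\pi)$ and that $(D_xH)H^{-1},(D_yH)H^{-1}\in\mathcal{F}^{\mathfrak{g}}(\pi)$; that $(A^H,B^H)$ is a representative of the ZCR $\restr{(A^H,B^H)}{\mathcal{E}}$ is tautological. Finally I would remark that $\restr{Q^H}{\mathcal{E}}=(\restr{HQ_1H^{-1}}{\mathcal{E}},\dots,\restr{HQ_NH^{-1}}{\mathcal{E}})$ is indeed the characteristic element of $\restr{(A^H,B^H)}{\mathcal{E}}$, which is consistent with --- and can alternatively be deduced from --- the transformation rule $\restr{K_j^H}{\mathcal{E}}=\restr{HK_jH^{-1}}{\mathcal{E}}$ recalled earlier together with Corollary~\ref{cor}. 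There is essentially no serious obstacle: the only subtlety is recognizing that the earlier gauge-invariance lemma, although stated only over $\mathcal{E}$, actually supplies the stronger pointwise identity over $J^{\infty}(\pi)$, after which the argument is just the commutativity of scalars with matrix conjugation.
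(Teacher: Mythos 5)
Your proposal is correct and follows essentially the same route as the paper: the on-the-nose conjugation identity $D_yA^H-D_xB^H+[A^H,B^H]=H(D_yA-D_xB+[A,B])H^{-1}$ on $J^{\infty}(\pi)$, substitution of the characteristic form, and pulling the scalar factors $F^l$ through the conjugation, with the closure of $\mathcal{F}^{\mathfrak{g}}(\pi)$ under conjugation by $\mathcal{F}^{\mathcal{G}}(\pi)$ handling membership. Your explicit remark that the earlier gauge-invariance lemma's proof already yields this identity over all of $J^{\infty}(\pi)$ (not just modulo $\mathcal{I}^{\mathfrak{g}}(\mathcal{E})$) is exactly the point the paper's proof relies on implicitly.
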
\label{gauge}
\begin{proof}
By a straightforward computation, and using the identities \eqref{der-inv} we obtain

\begin{align}
D_y A^H - D_x B^H + [A^H, B^H] 
&= H \left( D_y A - D_x B + [A, B] \right) H^{-1} 
= H \left( \sum_{l} F^lQ_l\right) H^{-1}\nonumber \\
&= \sum_{l}F^l \cdot (H Q_lH^{-1})\nonumber,
\end{align}
where for each term $ H Q_lH^{-1}$ we have $ H Q_lH^{-1}\in \mathcal{F}^{\mathfrak{g}}(\pi)$, as the Lie algebra $\mathcal{F}^{\mathfrak{g}}(\pi) $ is closed under conjugation by elements of $ \mathcal{F}^\mathcal{G}(\pi)$.

\end{proof}

Thus, if one wishes to search for, up to gauge equivalence, all $\mathfrak{g}$-valued ZCRs for a given equation, it suffices to search for their characteristic representatives modulo gauge transformations. To perform the classification modulo gauge transformations,  it is possible to rely on ideas and results concerning the normal forms of ZCRs obtained in \cite{Ma97},\cite{Se05} and \cite{Se08}, slightly modified for $N$-tuples of $\mathfrak{g}$-valued functions on the entire $J^{\infty}(\pi)$.

\section{Concluding remarks}

The main aim of this paper was to examine $\mathfrak{g}$-valued zero-curvature representations and their characteristic elements for equations in two independent variables as a natural extension of the concept of conservation laws and their generating functions -- primarily from the perspective of their extension to the entire jet space. Our goal was to generalize certain classical results known for conservation laws -- namely Classical Result 1 through 4 as formulated in Section \ref{sec:CL} -- to the nonabelian setting. 

We have succeeded in generalizing three of the classical results, and these generalizations hold  for all $\mathfrak{g}$-valued zero-curvature representations, regardless of any further properties of the Lie algebra $\mathfrak{g}\subset\mathfrak{gl}(n)$.
\begin{itemize}
\item A generalization of Classical Result 1 was achieved when we proved that for every $\mathfrak{g}$-valued zero-curvature representation, the corresponding Maurer--Cartan condition can be rewritten in a form analogous to the characteristic form of a conservation law (Proposition \ref{CharForm}).
\item Classical Result 2 was extended to the case of ZCRs once we recognized that the right-hand side of the Maurer--Cartan condition, expressed in characteristic form, can be interpreted as the `scalar product' of the functions defining the given equation with one of the many possible extensions of the characteristic element of the ZCR under consideration (Corollary \ref{cor}).
\item A generalization of Classical Result 4 to the case of $\mathfrak{g}$-valued ZCRs follows from the observation that gauge equivalence of $\mathbb{R}$-valued ZCRs coincides with the (second-kind) equivalence of conservation laws (Remark \ref{gaugeCL1},\ Proposition \ref{prop_gauge}).
\end{itemize}

In contrast, we were able to fully extend \textbf{Classical Result~3} to $\mathfrak{g}$-valued ZCRs only in the case where $\mathfrak{g}$ is an abelian matrix Lie algebra. In the nonabelian setting, the situation is more delicate. Nevertheless, we have succeeded in formulating a new, nontrivial necessary condition on ZCRs formulated in terms of their characteristic representatives (Corollary \ref{cor2}), which is trivially satisfied in the abelian case, as it reduces to the restriction of identity~\eqref{ZCRprop4char_suf} to~$\mathcal{E}$. In this sense, it may be viewed as a step toward extending Classical Result~3 to the nonabelian case.

Since the left-hand side of condition~\eqref{ZCRcondition} arises from applying a gauge-analogue of the Euler operator -- one of many possible candidates, due to the noncommutativity of the operators $\widehat{D}_x^A$ and $\widehat{D}_y^B$ on $\mathcal{F}^{\mathfrak{g}}(\pi)$ and due to the many possible representatives of $\restr{A}{\mathcal{E}}$ and $\restr{B}{\mathcal{E}}$ -- to the left-hand side of the characteristic form of the Maurer--Cartan condition and then restricting to~$\mathcal{E}$, it is natural to ask whether there exists a distinguished operator (uniquely determined by the given ZCR and the defining functions of~$\mathcal{E}$) whose application yields an identically vanishing expression on the entire jet space~$J^\infty(\pi)$.
The existence of such an operator would represent a genuine extension of the ``only if'' part of \textbf{Classical Result~3} to the nonabelian setting.

In Example~\ref{new1}, we successfully identified such an operator for the characteristic representative $(\bar A, \bar B)$ of a ZCR for the KdV equation  -- it was the operator $\widehat{\mathrm{E}}^{\bar A,\bar B}$.
However, for the characteristic representative $(\widetilde{A}, \widetilde{B})$ -- which corresponds to the same ZCR -- we were unable to find such an operator. \textit{This question therefore remains open.}

Although the results obtained in this paper are primarily of theoretical nature, we believe they could find further applications.

The result formulated in Proposition~\ref{CharForm}, together with Proposition~\ref{gauge}, shows that the characteristic form of a ZCR can be regarded as a kind of normal form for all $\mathfrak{g}$-valued ZCRs of a given (general) equation. As such, it can be used for classification purposes---for example, for a fixed Lie algebra $\mathfrak{g}$ and a given gauge class of function pairs represented by $(A,B)$ together with a corresponding $N$-tuple $(Q_1,\dots,Q_N)$, one may use it to ask the question which equations admit $(A,B)$ as a characteristic representative of a ZCR with characteristic $(Q_1,\dots,Q_N)$.

Moreover, the results obtained in this paper could be applied in the computational search for unknown $\mathfrak{g}$-valued ZCRs associated with a given equation. In the works~\cite{Ma92,Ma97}, the characteristic element and its transformation behaviour under gauge-transformations play a central role in such computations. However, those approaches rely solely on the necessary condition~\eqref{Nec} imposed on the characteristic element, together with the Maurer--Cartan condition~\eqref{ZCR}. We believe that the sufficient condition formulated in Corollary~\ref{cor} could enhance the effectiveness of these procedures. In analogy with the methods used to compute conservation laws described in Table 5 of~\cite{Wolf}, the existing approaches to computing ZCRs correspond to steps IA and IB. In contrast, our result would allow for the combined use of steps IB and IIA, as recommended by the author in the case of conservation laws.

Finally, the result formulated in Corollary \ref{cor2} provides, in addition to the Maurer--Cartan condition, $m$ further conditions expressed purely in terms of the characteristic representative. These may also be useful in computation.

\section*{Acknowledgements}
The author thanks M.~Marvan for stimulating discussions.  
The symbolic computations were performed in \textsc{Maple} with the help of the open-source package \textsc{Jets} \cite{Jets}.\\
This work was supported by the Ministry of Education, Youth and Sports of the Czech Republic (RVO IC47813059).\\


\begin{thebibliography}{99}
\bibitem{Bal15}  A.~V.~Balandin, \emph{Characteristics of Conservation Laws of Chiral-Type Systems}, Lett. Math. Phys. {105} (2015),  27-43. \href{https://doi.org/10.1007/s11005-014-0736-8}{DOI:10.1007/s11005-014-0736-8}
\bibitem{Bal16} A.~V.~Balandin, \emph{Tensor fields defined by Lax representations}, J. Nonlinear Math. Phys. \textbf{23} (2016), no.~3, 323--334. \href{https://doi.org/10.1080/14029251.2016.1199494}{DOI:10.1080/14029251.2016.1199494}
\bibitem{Jets} H.~Baran, M.~Marvan, \emph{Jets. A software for differential calculus on jet spaces and diffieties}.  \url{http://jets.math.slu.cz}.
\bibitem{Boch} A.V. Bocharov et al., \emph{Symmetries and Conservation Laws for Differential Equations of Mathematical Physics}, AMS, Providence, RI, 1999.
\bibitem{Igonin2019} S. Igonin, G. Manno, \emph{Lie algebras responsible for zero-curvature representations of scalar evolution equations}, J. Geom. Phys. \textbf{138} (2019), 297--316. \href{https://doi.org/10.1016/j.geomphys.2018.10.019}{DOI:10.1016/j.geomphys.2018.10.019}
\bibitem{IgoninManno2020} S. Igonin and G. Manno, \emph{On Lie algebras responsible for integrability of (1+1)-dimensional scalar evolution PDEs}, Journal of Geometry and Physics {\bf 150} (2020), 103596. \href{https://doi.org/10.1016/j.geomphys.2020.103596}{DOI:10.1016/j.geomphys.2020.103596}
\bibitem{Kra11} I.S. Krasil{\cprime}shchik, A. M. Verbovetsky,  \emph{Geometry of jet spaces and integrable systems}, J. Geom. Phys. {\bf 61} (2011), no. 9, 1633--1674. \href{https://doi.org/10.1016/j.geomphys.2010.10.012}{DOI:10.1016/j.geomphys.2010.10.012}, \href{https://arxiv.org/abs/1002.0077}{arXiv:1002.0077v6}
\bibitem{Kra17} I.S. Krasil{\cprime}shchik, A.M. Verbovetsky, R. Vitolo, \emph{The Symbolic Computation of Integrability Structures for Partial Differential Equations}, Texts \& Monographs in Symbolic Computation, Springer, Berlin (2017). \href{https://doi.org/10.1007/978-3-319-71655-8}{DOI:10.1007/978-3-319-71655-8}.
\bibitem{Ma92} M. Marvan, \emph{On zero-curvature-representations of partial differential equations},\  in: Differential Geometry and Its Applications, Proc. Conf. Opava, Czechoslovakia, Aug. 24?28, 1992, Math. Publ. 1 (Silesian University, Opava, 1993) 103--122.
\bibitem{Ma97} M. Marvan, \emph{A direct procedure to compute zero-curvature representations. The case sl2}, Secondary Calculus and Cohomological Physics, Proc. Conf. Moscow 1997, pp.10, 1998.
\bibitem{Ma02} M. Marvan, \textit{Scalar second-order evolution equations possessing an irreducible $sl_2$-valued zero-curvature representation}, J. Phys. A: Math. Gen 35 (2002), 9431-9439. \href{https://iopscience.iop.org/article/10.1088/0305-4470/35/44/312}{10.1088/0305-4470/35/44/312}
\bibitem{Ma02-2} M. Marvan, \textit{On the Horizontal Gauge Cohomology and Nonremovability of the Spectral Parameter}, Acta Applicandae Mathematicae 72 (2002), 51--65, \href{https://doi.org/10.1023/A:1015218422059}{DOI:10.1023/A:1015218422059}
\bibitem{Ol93} P. J. Olver, \textit{Applications of Lie Groups to Differential Equations}, 2nd edition, Springer-Verlag, 1993.
\bibitem{Sa95} S. Yu. Sakovich,\textit{On zero-curvature representations of evolution equations},\ J. Phys. A: Math. Gen 28 (1995),\ 2861--2869. \href{https://doi.org/10.1088/0305-4470/28/10/016}{DOI:10.1088/0305-4470/28/10/016}
\bibitem{Sak04} S. Yu. Sakovich, \emph{Cyclic Bases of Zero-Curvature Representations: Five Illustrations to One Concept}, Acta Appl. Math. {\bf 83} (2004), 69-83. \href{https://doi.org/10.1023/B:ACAP.0000035589.61486.a7}{DOI:10.1023/B:ACAP.0000035589.61486.a7}

\bibitem{Sa11} S.~Yu. Sakovich, \emph{Integrability of the Bakirov System: A Zero-Curvature Representation}, 
International Journal of Mathematics and Mathematical Sciences, {\bf 2011} (2011), 1--7.
\href{https://doi.org/10.1155/2011/497828}{DOI:10.1155/2011/497828}
\bibitem{Se05}P. Sebesty\'{e}n, \emph{Normal forms of irreducible $\mathfrak{sl}(3)$-valued zero-curvature representations}, Reports on Mathematical Physics 55 (2005), no. 3, 435-445. \href{https://doi.org/10.1016/S0034-4877(05)80057-6}{DOI:10.1016/S0034-4877(05)80057-6}

\bibitem{Se08}
P. Sebesty\'{e}n, \emph{On normal forms of irreducible $\mathfrak{sl}(n)$-valued zero-curvature representations}, Reports on Mathematical Physics 62 (2008), no. 1, 57-68. \href{https://doi.org/10.1016/S0034-4877(08)00021-9}{DOI:10.1016/S0034-4877(08)00021-9}

\bibitem{Wahl75} H.D. Wahlquist, F. B. Estabrook. \emph{Prolongation structures of nonlinear evolution equations} I,\ II, J. Math. Phys. 16 (1975), 1--7. \href{https://doi.org/10.1063/1.522396}{DOI:10.1063/1.522396}
\bibitem{Wolf}
T. Wolf, \emph{A comparison of four approaches to the calculation of conservation laws}, European Journal of Applied Mathematics {\bf 13} (2002), no. 2, 129--152. \href{https://doi.org/10.1017/S0956792501004715}{DOI:10.1017/S0956792501004715}

\bibitem{Zak} V. E. Zakharov,\ A. B. Shabat,\ \textit{Integrovanie nelinejnykh uravnenij matematicheskoj fiziki metodom obratnoj zadachi rasseyaniya, } Funkc. Anal. Prilozh. 13 (1979) (3), 13--22.
\end{thebibliography}
\end{document}